\theoremstyle{remark}
\newtheorem{summary}{\protect\summaryname}
\theoremstyle{plain}
\newtheorem{assumption}{\protect\assumptionname}
\theoremstyle{plain}
\newtheorem{thm}{\protect\theoremname}
\theoremstyle{plain}
\newtheorem{question}{\protect\questionname}
\theoremstyle{definition}
\newtheorem{defn}{\protect\definitionname}
\theoremstyle{remark}
\newtheorem{rem}{\protect\remarkname}
\theoremstyle{plain}
\newtheorem{ax}{\protect\axiomname}
\theoremstyle{plain}
\newtheorem*{assumption*}{\protect\assumptionname}
\theoremstyle{plain}
\newtheorem*{question*}{\protect\questionname}
\theoremstyle{definition}
\newtheorem*{defn*}{\protect\definitionname}
\theoremstyle{plain}
\newtheorem*{ax*}{\protect\axiomname}
\providecommand{\assumptionname}{Assumption}
\providecommand{\axiomname}{Axiom}
\providecommand{\definitionname}{Definition}
\providecommand{\questionname}{Question}
\providecommand{\remarkname}{Remark}
\providecommand{\summaryname}{Summary}
\providecommand{\theoremname}{Theorem}
\begin{document}
\begin{doublespace}
\begin{center}
\textbf{\Large{}The Economics of Enlightenment: Time Value of Knowledge
and the Net Present Value (NPV) of Knowledge Machines, A Proposed
Approach Adapted from Finance}{\Large\par}
\par\end{center}

\begin{center}
\textbf{Ravi Kashyap }
\par\end{center}

\begin{center}
\textbf{SolBridge International School of Business / City University
of Hong Kong }
\par\end{center}

\begin{center}
\today
\par\end{center}

\begin{center}
Keywords: Knowledge Valuation; Time Value; Money; Finance; Net Present
Value; Economics; Enlightenment
\par\end{center}

\begin{center}
Journal of Economic Literature Codes: D81 Criteria for Decision-Making
under Risk and Uncertainty; G12 Asset Pricing; D8 Information, Knowledge,
and Uncertainty; D46 Value Theory
\par\end{center}

\begin{center}
Mathematics Subject Classification Codes: 91B06 Decision theory; 91B25
Asset pricing models; 68T30 Knowledge representation 
\par\end{center}
\end{doublespace}

\begin{center}
\textbf{\textcolor{blue}{\href{https://doi.org/10.1515/bejeap-2019-0044}{Edited Version: Kashyap, R. (2019). The Economics of Enlightenment: Time Value of Knowledge and the Net Present Value (NPV) of Knowledge Machines, A Proposed Approach Adapted from Finance. B.E. Journal of Economic Analysis and Policy, 20(2), 2019-0044. }}}
\par\end{center}

\begin{doublespace}
\begin{center}
\tableofcontents{}
\par\end{center}
\end{doublespace}
\begin{doublespace}

\section{Abstract}
\end{doublespace}

\begin{doublespace}
We formulate one methodology to put a value or price on knowledge
using well accepted techniques from finance. We provide justifications
for these finance principles based on the limitations of the physical
world we live in.\textcolor{black}{{} }We start with the intuition for
our method to value knowledge and then formalize this idea with a
series of axioms and models.\textcolor{black}{{} To the best of our
knowledge this is the first recorded attempt to put a numerical value
on knowledge.} The implications of this valuation exercise, which
places a high premium on any piece of knowledge, are to ensure that
participants in any knowledge system are better trained to notice
the knowledge available from any source. Just because someone does
not see a connection does not mean that there is no connection. We
need to try harder and be more open to acknowledging the smallest
piece of new knowledge that might have been brought to light by anyone
from anywhere about anything.\pagebreak{}
\end{doublespace}
\begin{doublespace}

\section{\label{sec:Knowledge-for-What}Knowledge for What Sake? }
\end{doublespace}

\begin{doublespace}
Despite the apparent lack of consensus as to what knowledge really
is and a possibly demonstrable absence of well accepted reasons for
maybe why we even need knowledge, most would agree that knowledge
is valuable\footnote{\begin{doublespace}
\label{fn:As-the-name}As the name of (section \ref{sec:Knowledge-for-What},
Knowledge for What Sake?) suggests, whatever knowledge might be if
it can be traded for whichever sake, the Japanese drink that we might
be able to ferret out, it might seem like a good exchange, or trade,
to some of us. Such a trade or barter or willingness to pay for a
commodity can help us establish the price or value for any good. The
main issue with such an approach, though it can be helpful, is that
it is an inaccurate estimate of the innate worth of the article being
transacted since it involves subjective decisions being made by the
participants including any dependencies on the particular situation
under which the transaction was performed (circumstances that might
have forced one of the participants to participate). Also, price estimation,
which involves predicting future benefits, is highly uncertain and
unreliable as evidenced by the enormous, growing and complex literature
on asset pricing (Cochrane 2009). Hence, our approach to estimating
the value of knowledge is much broader in scope than finding the price
for an exchange, which involves monetary or utilitarian connotations
(Foot-note \ref{fn:The-utility-maximization}). Price and value are
used synonymously in daily life and in most financial transactions.
But it should be clear that price and actual value are entirely distinct
for knowledge since someone truly looking for a certain piece of knowledge
will know that in some cases no amount of money can help obtain that
piece of knowledge.
\end{doublespace}
}\footnote{\begin{doublespace}
\label{fn:The-utility-maximization}The utility maximization problem
is the problem consumers face: \textquotedbl how should I spend my
money in order to maximize my utility?\textquotedbl{} \href{https://en.wikipedia.org/wiki/Utility_maximization_problem}{Utility Maximization, Wikipedia Link};
Within economics the concept of utility is used to model worth or
value, but its usage has evolved significantly over time. \href{https://en.wikipedia.org/wiki/Utility}{Utility, Wikipedia Link}.
\end{doublespace}
} . (Russell 1948) is perhaps an ideal place to start examining the
relation between individual experience and the general body of scientific
knowledge; see also: (Kvanvig 2003; Pritchard 2009; Pritchard, Millar
\& Haddock 2010; Crevoisier 2016)\footnote{\begin{doublespace}
\label{fn:Knowledge}Knowledge is a familiarity, awareness, or understanding
of someone or something, such as facts, information, descriptions,
or skills, which is acquired through experience or education by perceiving,
discovering, or learning. \href{https://en.wikipedia.org/wiki/Knowledge}{Knowledge, Wikipedia Link}
\end{doublespace}
}. (Boswell 1873; Toffler 1990)\footnote{\begin{doublespace}
\label{fn:Knowledge-Value}The idea that knowledge has value is ancient.
In the 1st century AD, Juvenal (55-130) stated “All wish to know but
none wish to pay the price\textquotedbl{} (Highet 1961; Courtney 2013).
In 1775, Samuel Johnson wrote: “All knowledge is of itself of some
value.” These quotes are from this Wikipedia link: \href{https://en.wikipedia.org/wiki/Knowledge_value}{Knowledge Value, Wikipedia Link},
which has many more quotes. The first time we checked the link was
around Oct-16-2017.
\end{doublespace}
} have a collection of quotes starting from the 1st century AD, stating
the ancient belief that knowledge has worth. That the Wikipedia link,
(Foot-note \ref{fn:Knowledge-Value}) about the value of knowledge,
has material only from as recently as 2000 years ago, shows how limited
our knowledge regarding the value history has placed on knowledge
is. (Dancy, Sosa \& Steup 2009; DeRose 2005; Figueroa 2016; Pritchard
2018; Hetherington 2018)\footnote{\begin{doublespace}
\label{fn:Epistemology}Epistemology is the study of the nature of
knowledge, justification, and the rationality of belief. It is the
branch of philosophy concerned with the theory of knowledge. \href{https://en.wikipedia.org/wiki/Epistemology}{Epistemology, Wikipedia Link}
\end{doublespace}
} are an excellent collection of articles on leading theories, thinkers,
ideas, distinctions and concepts in epistemology.
\end{doublespace}
\begin{summary}
\begin{doublespace}
\textbf{\textit{In later sections (and papers), we hope to pitch in
to this effort to refocus what knowledge stands for and how best to
spread it. }}\textbf{\textit{\textcolor{black}{Our contribution to
the knowns and unknowns about knowledge is to formulate one methodology
to put a value or price on knowledge}}}\textbf{\textit{ (Foot-note
\ref{fn:As-the-name})}}\textbf{\textit{\textcolor{black}{{} using well
accepted techniques from finance. }}}\textbf{\textit{We start with
the intuition for one method to value knowledge. We then formalize
this idea with a series of axioms and models.}}\textbf{\textit{\textcolor{black}{{}
To the best of our knowledge (limited as it is, which follows from
our present understanding of knowledge and also from the definition
of knowledge we use below, Definition \ref{def:Definition-Knowledge-is-a}),
this is the first recorded attempt to put a numerical value on knowledge
using well known valuation techniques. }}}\textbf{\textit{Our main
result (Theorem \ref{thm:Main-Result-}) provides a lower bound for
the value of knowledge. The implications of this valuation exercise,
which places a high premium on any piece of knowledge, are to ensure
that the participants in any knowledge system are better trained to
notice the knowledge available from any source. }}
\end{doublespace}
\end{summary}
\begin{doublespace}
In a related paper, (Kashyap 2018), we discuss a process to maximize
the efficiency of knowledge creation and show how our valuation method
signifies that the best way for journals to select submissions would
be randomly from a pool of papers meeting certain basic quality criteria.
We specifically show that the best decision we can make with regards
to the selection of articles by journals requires us to formulate
a cutoff point, or, a region of optimal performance and randomly select
from within that region of better results. The policy implication
(for all fields) is to randomly select papers, based on publication
limitations (journal space, reviewer load etc.), from an overall pool
of submissions that have a single shred of knowledge (or one unique
idea) and have the editors and reviewers coach the authors to ensure
a better final outcome. The results in this present paper set the
stage for the developments in (Kashyap 2018) where we compare existing
publication processes with an alternative approach based on randomization.

(Section \ref{sec:Questions-=000026-Answers,}) lays down the framework
required for our valuation exercise by providing the intuition, the
corresponding assumptions and definitions along with a review of the
relevant literature. (Section \ref{sec:Switching-on-Knowledge}) builds
upon this by making further assumptions and extends it to a formal
approach to include axioms, notation, terminology and the actual results.
We have tried to ensure that the bulk of the narrative within the
main body of the paper is mostly self-contained so that it can be
easily followed by a wider audience. But for those wishing to have
more details and a deeper context we have provided a rich set of foot-notes
that supplement the central arguments. We have provided a list of
all the definitions and assumptions including a dictionary of notation
and terminology in (appendices \ref{sec:Appendix-Definitions-Assumptions};
\ref{sec:Notation-and-Terminology}). 
\end{doublespace}
\begin{doublespace}

\section{\label{sec:Questions-=000026-Answers,}Questions \& Answers, Q\&A,
Definitions and Assumptions, D\&A, in our DNA}
\end{doublespace}

\begin{doublespace}
It would not be entirely incorrect to state that the majority of the
attempts at knowledge creation start with answering questions. In
present day society we seem to be focused on answering questions that
originate in different disciplines.
\end{doublespace}
\begin{assumption}
\begin{doublespace}
\label{assu:Hence,-as-a-field-categorization}As a first step, we
recognize that one possible categorization of different fields can
be done by the set of questions a particular field attempts to answer.
Since we are the creators of different disciplines, but we may or
may not be the creators of the world (based on our present state of
knowledge and understanding) in which these fields need to operate,
the answers to the questions posed by any domain can come from anywhere
or from phenomenon studied under a combination of many other disciplines.
\end{doublespace}
\end{assumption}
\begin{doublespace}
Hence, the answers to the questions posed under the realm of knowledge
creation can come from seemingly diverse subjects such as: physics,
biology, mathematics, chemistry, marketing, economics, finance and
so on. This quest for answers is bounded only by our imagination (Calaprice
2000)\footnote{\begin{doublespace}
\label{fn:Answering-questions-can}Answering questions can be also
viewed as solving problems that arise in any facet of life. This suggests
that we might be better off identifying ourselves with problems and
solutions, which tacitly confers upon us the title Problem Solvers,
instead of calling ourselves physicists, biologists, psychologists,
marketing experts, economists and so on.
\end{doublespace}
}. As we linger on the topic of Questions \& Answers, Q\&A. The field
that is most concerned with the valuation of assets is finance (for
lack of knowledge of a better word, or terminology, on behalf of the
authors, let us categorize knowledge under the umbrella of assets).
Hence, it should not come as a surprise that finance can provide a
very surprising answer to our main research question.
\end{doublespace}
\begin{question}
\begin{doublespace}
\textbf{\label{que:What-Knowledge-Value}What is the value of knowledge
in any field?}
\end{doublespace}
\end{question}
\begin{doublespace}
Any answer we wish to seek would depend on some Definitions and Assumptions,
D\&A. But if we change those D\&A we might get different Q\&A\footnote{\begin{doublespace}
\label{fn:Maybe,-DNA-hold}Maybe, we hold the lessons from the lives
of every ancestor we have ever had, even telling us that Q\&A and
D\&A might be in our very DNA, the biological one, which are always
changing (Alberts, Johnson, Lewis, Raff, Roberts \& Walter 2002; Alberts
2017). Evolution is constantly coding the information, compressing
it and passing forward, what is needed to survive better and to thrive,
building what is essential right into our genes. For information storage
in DNA and related applications see: (Church, Gao \& Kosuri 2012;
Lutz, Ouchi, Liu \& Sawamoto 2013; Kosuri \& Church 2014; Roy, Meszynska,
Laure, Charles, Verchin \& Lutz 2015). 

\label{DNA}Deoxyribonucleic acid (DNA) is a molecule composed of
two chains (made of nucleotides) that coil around each other to form
a double helix carrying the genetic instructions used in the growth,
development, functioning and reproduction of all known living organisms
and many viruses. \href{https://en.wikipedia.org/wiki/DNA}{DNA, Wikipedia Link}
\end{doublespace}
}.
\end{doublespace}
\begin{assumption}
\begin{doublespace}
\label{assu:Questions-=000026-Answers,}Questions \& Answers, Q\&A,
are important, but Definitions and Assumptions, D\&A, are even more
important since changing D\&A could require us to consider different
Q\&A.
\end{doublespace}
\end{assumption}
\begin{doublespace}
The implication of (assumption \ref{assu:Questions-=000026-Answers,})
is that all Q\&A we consider here are to be viewed in the context
of the D\&A provided here. While we endeavor to provide as realistic
and all encompassing a discussion as possible, a changing world (or
changes in our understanding of it) might render our Q\&A obsolete.
A related question that comes up is: what is finance? The answer is
that finance is a game where there are only three simple decisions
to be made: Buy, Sell or Hold; the complication are mainly to get
to these results (Kashyap 2015).
\end{doublespace}
\begin{doublespace}

\subsection{Related Literature}
\end{doublespace}

\begin{doublespace}
A recent attempt, (Martin 1996), in the context of the time periods
discussed in (section \ref{sec:Knowledge-for-What}), acknowledges
that measuring the value of knowledge has not progressed much beyond
an awareness that traditional accounting practices are misleading
and can lead to wrong business decisions. Right at the outset, we
distinguish between our knowledge valuation methodology and the valuation
of patents since all patents have some knowledge associated with them,
but not all knowledge might lead to a patent or an industrial application.
A patent is defined as any new or non-obvious invention capable of
industrial application (Pitkethly 1997). (Wu \& Tseng 2006) provide
a valuation technique based on real options. We also note that patents
are mostly granted based on demonstrated novelty and the associated
estimation of the future benefits of the corresponding idea or invention
tend to be extremely difficult. The question of novelty with respect
to knowledge and the distinction between old and new knowledge is
considered further in (section \ref{sec:Switching-on-Knowledge}).

While it is not straight forward to draw a distinction between basic
and applied research; many attempts at seeking knowledge are purely
for the sake of understanding a concept; the immediate utility of
such an undertaking is overlooked and perhaps not even recognized
when such efforts are undertaken (Shepard 1956; Nelson 1959; Reagan
1967; Pavitt 1991). (Foray \& Lundvall 1998) is a detailed discussion
of the economic impact of knowledge. (Barnett 1999) mentions that
even universities, which are solely meant to create and spread knowledge,
might have lost their way and how they need a new sense of purpose.
(Delanty 2001) is about the role of universities in the knowledge
society.

(Bozeman \& Rogers 2002) admit that determining the value of scientific
and technical knowledge poses a great many problems (the value of
knowledge shifts dramatically over time as new uses for the knowledge
emerge; a related problem is that market-based valuation of knowledge
is an inadequate index of certain types of scientific knowledge).
They present an alternative framework for the value of scientific
and technical knowledge, one based not on market pricing of information,
but instead on the intensity and range of uses of scientific knowledge.
Their churn model of scientific knowledge value emphasizes the distinctive
properties of scientific and technical knowledge and focuses on the
social context of its production. They consider the value of scientific
and technical knowledge in enhancing the activities of the set of
individuals who interact in the demand, production, technical evaluation,
and application of scientific and technical knowledge.

It is worth noting that goods that are not actively traded pose many
valuation challenges. There are many interesting techniques used to
determine the value of assets, especially non-financial ones. (Zhao
\& Zhou 2011) consider status indicators that determine wine prices
such as: individual wine tasting scores rated by critics and more
general status indicators like classified appellation affiliation,
extra designation on the label, or a winery's organizational status;
(Gustafson, Lybbert \& Sumner 2016) present an experimental approach
to measure consumer willingness to pay for wine attributes; (Costanigro,
McCluskey \& Mittelhammer 2007) provide empirical evidence that the
wine market is differentiated into multiple segments or wine classes
based on price ranges and find evidence that consumers value the same
wine attributes differently across categories.

(Ortiz, Stone \& Zissu 2017) develop a valuation model for a crowd
funding initiative to finance a pipeline to transport beer, which
takes into account the terms of the contract, the current and expected
future price of beer, and the life expectancy of the investor; (Masset
\& Weisskopf 2015) examine the performance, selectivity, and market-timing
abilities of wine fund managers; (Aznar \& Guijarro 2007) discuss
aesthetic variables in paintings; (Carmona 2015) is about jewelry
appraisal; (De Groot, Wilson \& Boumans 2002) present a conceptual
framework and typology for describing, classifying and valuing ecosystem
functions, goods and services; (Throsby 2003) is a discussion of the
non-market value of cultural goods - literature, visual arts, music,
theater, dance, etc.; (Lebreton, Jorge, Michel, Thirion \& Pessiglione
2009) isolate brain regions that may constitute a system that automatically
engages in valuating the various components of our environment so
as to influence our future choices.

As we go about applying finance principles to assess the value of
knowledge in all domains, we need to bear in mind that all valuations
are subjective since they are done by social beings and a hall mark
of the social sciences is the lack of objectivity. Here we assert
that objectivity is with respect to comparisons done by different
participants and that a comparison is a precursor to a decision. (Nagel
2012) clarifies the distinction between subjective and objective;
(Little 1991; Manicas 1991; Rosenberg 2018) discuss the philosophy
of the social sciences; (Kashyap 2017) points out that the social
sciences are objectively subjective; (Gerring 2011) is a detailed
account of social science methods used to provide explanations of
observed phenomena; (Harré 1985) explores the premise that knowledge
is a basis for moral good and relates various views about the nature
of science to different historical schools of philosophy; (Papineau
2002; O'hear 1993; Tweney, Doherty \& Mynatt 1981)\footnote{\begin{doublespace}
\label{enu:Philosophy-of-science}Philosophy of science is a sub-field
of philosophy concerned with the foundations, methods, and implications
of science. The central questions of this study concern what qualifies
as science, the reliability of scientific theories, and the ultimate
purpose of science. \href{https://en.wikipedia.org/wiki/Philosophy_of_science}{Philosophy of Science, Wikipedia Link}

\label{enu:The-scientific-method}The scientific method is an empirical
method of knowledge acquisition which has characterized the development
of science since at least the 17th century. \href{https://en.wikipedia.org/wiki/Scientific_method}{Scientific Method, Wikipedia Link}

\label{enu:philosophy-of-social-science}The philosophy of social
science is the study of the logic, methods, and foundations of social
sciences such as psychology, economics, and political science. \href{https://en.wikipedia.org/wiki/Philosophy_of_social_science}{Philosophy of Social Science, Wikipedia Link}
\end{doublespace}
} are detailed accounts of the philosophy and methodology of science.
\end{doublespace}
\begin{doublespace}

\subsection{Framework (D\&A) for Knowledge Valuation}
\end{doublespace}
\begin{assumption}
\begin{doublespace}
\label{assu:Despite-the-several}Despite the several advances in the
social sciences, we have yet to discover an objective measuring stick
for comparison, a so called, True Comparison Theory, which can be
an aid for arriving at objective decisions. Hence, despite all the
uncertainty in the social sciences, the one thing we can be almost
certain about is the subjectivity in all decision making.
\end{doublespace}
\end{assumption}
\begin{doublespace}
For our present purposes, the lack of such an objective measure means
that the difference in comparisons, as assessed by different participants,
can give rise to different valuations for the same element of knowledge
(or asset). We consider two extreme individuals and their perspectives,
which would influence their valuations. 
\end{doublespace}
\begin{defn}
\begin{doublespace}
\textbf{Type A} person, who has \textbf{All} the known knowledge in
the universe. So if any new knowledge becomes available he is desperate
to have it, since without this new knowledge he is incomplete.
\end{doublespace}
\end{defn}
\begin{doublespace}
We have to mention here that a type A person would place an extremely
high valuation on every element of knowledge he already has and any
new knowledge that might come up.
\end{doublespace}
\begin{defn}
\begin{doublespace}
\textbf{Type Z} person, who has no knowledge about anything in the
universe. So he cares nothing about any knowledge, wants nothing and
his valuation for all pieces of knowledge would be \textbf{Zero}. 
\end{doublespace}
\end{defn}
\begin{doublespace}
It is worth noting that there might be views expressed by people,
(or, people with beliefs in this very world that we live in who do
not seek anything not even knowledge), that once you stop looking
for things you will have everything. This would be a contradiction
to our definition, since in this case the type Z person is the one
who wants nothing and would put a valuation of zero on any new knowledge,
but since he wants nothing he would actually have everything and he
becomes the type A person. Alternately, the type A person, has all
the knowledge in the world which suggests that he wants nothing, or
his valuation of everything including knowledge is zero, making him
the type Z person. The scope of our discussion will be restricted
to someone who is between Type A and Z, so that he has a non-zero
valuation of every piece of knowledge.

The assumption made in finance regarding homogeneous expectations
(Levy \& Levy 1996; Chiarella \& He 2001), especially in the derivation
of many asset pricing models, investment analysis and portfolio management
principles (Bodie, Kane \& Marcus 2014), is stunningly sophisticated,
yet seemingly simplistic. Most people would argue that no two people
are alike, so this assumption does not seem validated (Valsiner 2007;
Buss 1985; Plomin \& Daniels 1987). Then again, rethinking this slightly
might lead to the following modified assumption.
\end{doublespace}
\begin{assumption}
\begin{doublespace}
The homogeneous expectations assumption in finance is perhaps a very
futuristic one where we are picking the best habits and characteristics
from our fellow beings (maybe not just humans?); and the environment
we live in and the external stimulus we receive tends to become more
similar (or we start to perceive it as more alike?), and at some point
in the future we might tend to have more in common with each other
fulfilling this great assumption, which seems more of a prophecy.
\end{doublespace}
\end{assumption}
\begin{doublespace}
There are many issues if we become too much like one another (Slatkin
1987; Frankham 1995; 1997)\footnote{\begin{doublespace}
\label{enu:Similar-Survival}If we become too similar, then mother
nature, or, evolution, will have less to work with. Since more differences
tell her, which traits are better for certain conditions; and many
possibilities create stronger survival potential. This is studied
under the label inbreeding. Not to mention, if we all look alike,
think alike and act alike the world would be a very mundane place. 

\label{enu:Inbreeding-is-the}Inbreeding is the production of offspring
from the mating or breeding of individuals or organisms that are closely
related genetically. \href{https://en.wikipedia.org/wiki/Inbreeding}{Inbreeding, Wikipedia Link}
\end{doublespace}
}. But with respect to finance we might evolve enough, so that one
day we might have the same expectations with respect to our monetary
concerns. This would also be the day when the Bid-Offer spread would
cease to matter, or, we would be indifferent to it making every coffee
shop, theater, street corner, pub, or everywhere … a venue for any
product (Kashyap 2015).
\end{doublespace}
\begin{assumption}
\begin{doublespace}
Using a related concept from economics regarding equilibriums (Dixon
1990; Varian 1992)\footnote{\begin{doublespace}
\label{enu:Equilibriums-should-perhaps}Equilibriums should perhaps
be more aptly named Quasi-Equilibriums since we never know what a
true equilibrium is; but perhaps any system fluctuates between multiple
equilibriums, somewhat like a see-saw: (Mantzicopoulos \& Patrick
2010; Stocker 1998).

\label{Seesaw}A seesaw (also known as a teeter-totter or teeter-board)
is a long, narrow board supported by a single pivot point, most commonly
located at the midpoint between both ends; as one end goes up, the
other goes down. \href{https://en.wikipedia.org/wiki/Seesaw}{Seesaw, Wikipedia Link}
\end{doublespace}
}, when we continue to evolve and evolve towards similarity, both the
\textbf{type} \textbf{A} and \textbf{type} \textbf{Z} kind of person
can be equilibriums, since they are (they become?) the same kind of
person with respect to their views on identifying the value of elements
around them.
\end{doublespace}
\end{assumption}
\begin{doublespace}
To better understand knowledge, let us first start with what is not
knowledge. Anything that we don't know is not knowledge\footnote{\begin{doublespace}
\label{enu:Confusion-Frustration}Many times what we don't know is
scary or can cause confusion or frustration. Confusion and Frustration,
though, scary and ugly to begin with, can be powerful motivators,
as long as, we don’t let them bother us. This is because: 
\end{doublespace}
\begin{itemize}
\begin{doublespace}
\item Confusion is the beginning of Understanding. 
\item Necessity, is the mother of all creation / innovation / invention,
but the often forgotten father, is Frustration, which is sometimes,
even more necessary, than necessity herself. Simply put, some amount
of frustration can be highly stimulating and lead to great possibilities.
\item What we learn from the story of, Beauty and the Beast (De Beaumont
1804), is that, we need to love the beasts to find beauty. Hence,
if we start to love these monsters (Confusion and Frustration), we
can unlock their awesomeness and find truly stunning solutions.
\item \label{enu:Beauty-Beast}Beauty and the Beast (French: La Belle et
la Bête) is a fairy tale written by French novelist Gabrielle-Suzanne
Barbot de Villeneuve and published in 1740 in The Young American and
Marine Tales (French: La Jeune Américaine et les contes marins). Her
lengthy version was abridged, rewritten, and published first by Jeanne-Marie
Leprince de Beaumont in 1756. \href{https://en.wikipedia.org/wiki/Beauty_and_the_Beast}{Beauty and the Beast, Wikipedia Link}
\end{doublespace}
\end{itemize}
}. We will further try to provide one definition for what knowledge
might be keeping in mind that as generic as we want to make any definition,
we need to be open to the possibility that the definition might need
to be altered depending on the specifics of the situation.
\end{doublespace}
\begin{defn}
\begin{doublespace}
\textbf{\textit{\label{def:Definition-Knowledge-is-a}Knowledge is
a connection between elements of this universe. The elements could
be many (more than two), two, or in some cases a link from one element
to the same element and all other combinations. This requires us to
clarify what is an element. We suggest that the element discussed
here is anything that belongs to this universe and any characteristic
of that element, as observable in this universe}}\footnote{\begin{doublespace}
\label{enu:We-emphasize-Universe}We emphasize the word universe since
our definition would need to be modified once we establish (there
is already speculation regarding this eventuality) the possibility
of other universes (Carr 2007; Weinberg 2007).

\label{Multiverse, Wikipedia Link}The multiverse is a hypothetical
group of multiple universes including the universe in which humans
live. \href{https://en.wikipedia.org/wiki/Multiverse}{Multiverse, Wikipedia Link}
\end{doublespace}
}\textbf{\textit{.}}
\end{doublespace}
\end{defn}
\begin{rem}
\begin{doublespace}
These connections can also be viewed as answers to appropriately posed
questions governed by suitable assumptions and definitions. Our present
endeavors in knowledge creation, or scientific research, can be understood
in this way. 
\end{doublespace}
\begin{rem}
\begin{doublespace}
Another possibility, which we consider in more detail in (section
\ref{subsec:Relativity-New-Knowledge}), is that knowledge, or the
connections between elements, exists with or without our observation
of those links. Many times our cognizance or understanding could be
incomplete or even incorrect and progressively gets better even though
the actual phenomenon itself has not changed.
\end{doublespace}
\begin{rem}
\begin{doublespace}
(Section \ref{subsec:The-Limits,-The}) considers the dimensions to
which we are (seem to be?) presently restricted to and the possibility
that any understanding of our universe is with respect to the limitations
imposed by the dimensions we are able to perceive. The possibility
of higher dimensions means the possibility of better comprehension
or it could simply be an altered view of perception from a different
number of dimensions (Kashyap 2019).
\end{doublespace}
\begin{rem}
\begin{doublespace}
It is important to emphasize here that the connection between elements
is not just at any particular point in time, but inter-temporally
and even across other higher dimensions. This also tells us that knowledge
from one time period is valuable for other time periods since they
could be linked. Knowledge across time periods can be useful should
there be a possibility of the same connection reoccurring, or, by
using the connection we know about to create a modified or new connection;
that is the value of knowledge is enhanced if we are able to use (or
reuse) existing connections discovered from another time period.
\end{doublespace}
\end{rem}
\end{rem}
\end{rem}
\end{rem}
\begin{defn}
\begin{doublespace}
\label{def:Knowledge-machines-are}Knowledge machines are elements
themselves that look to create, or discover, or record connections
between the various elements. They are people, research journals,
books, music, robots and everything else that fulfills the property
of being part of the efforts to add to the collective pool of knowledge.
We can also term them knowledge seekers.
\end{doublespace}
\end{defn}
\begin{doublespace}

\section{\label{sec:Switching-on-Knowledge}Switching on Knowledge Machines}
\end{doublespace}
\begin{doublespace}

\subsection{\label{subsec:Money-Machines-that}Money Machines that Run Secretly}
\end{doublespace}

\begin{doublespace}
In finance (Cochrane 2009), we talk about something called as a ``Money
Machine'' which will get turned off, as soon as people step in to
take advantage of it. This is also know as arbitrage (Shleifer \&
Vishny 1997; Brown \& Werner 1995)\footnote{\begin{doublespace}
\label{enu:Payoff-Arbitrage-LOOP}If payoff $A$ is always at least
as good as payoff $B$, and sometimes $A$ is better, then the price
of $A$ must be greater than the price of $B$. If two portfolios
(assets or goods) have the same payoffs (in every state of nature),
then they must have the same price. A few other subtleties can arise;
Arbitrage is possible when one of three conditions is met:
\end{doublespace}
\begin{itemize}
\begin{doublespace}
\item The same asset does not trade at the same price on all markets (\textquotedbl the
law of one price\textquotedbl ). 
\item Two assets with identical cash flows do not trade at the same price. 
\item An asset with a known price in the future does not trade today at
its future price discounted at the risk-free interest rate (or, the
asset has significant costs of storage; as such, for example, this
condition holds for grain but not for securities). 
\end{doublespace}
\end{itemize}
\begin{doublespace}
\label{Arbitrage, Wikipedia Link}In economics and finance, arbitrage
is the practice of taking advantage of a price difference between
two or more markets: striking a combination of matching deals that
capitalize upon the imbalance, the profit being the difference between
the market prices. When used by academics, an arbitrage is a (imagined,
hypothetical, thought experiment) transaction that involves no negative
cash flow at any probabilistic or temporal state and a positive cash
flow in at least one state; in simple terms, it is the possibility
of a risk-free profit after transaction costs. For example, an arbitrage
opportunity is present when there is the opportunity to instantaneously
buy something for a low price and sell it for a higher price. \href{https://en.wikipedia.org/wiki/Arbitrage}{Arbitrage, Wikipedia Link}
\end{doublespace}
} and it is possible when the law of one price is violated (Isard 1977;
Crouhy-Veyrac, Crouhy \& Melitz 1982; Protopapadakis \& Stoll 1983;
Goodwin, Grennes \& Wohlgenant 1990; Froot, Kim \& Rogoff 1995)\footnote{\begin{doublespace}
\label{Law of One Price, Wikipedia Link}The law of one price (LOOP)
states that in the absence of trade frictions (such as transport costs
and tariffs), and under conditions of free competition and price flexibility
(where no individual sellers or buyers have power to manipulate prices
and prices can freely adjust), identical goods sold in different locations
must sell for the same price when prices are expressed in a common
currency. This law is derived from the assumption of the inevitable
elimination of all arbitrage. \href{https://en.wikipedia.org/wiki/Law_of_one_price}{Law of One Price, Wikipedia Link}
\end{doublespace}
}. This is nothing but buying the asset that we think is cheaper than
what it should be and selling the asset that we think is more expensive
than what it should be. This is based on the price of the asset, or,
the market assessment of the asset in an applicable market, as of
today. When no price is available either due to the lack of a corresponding
market or participants, we can use the expectation of discounted future
cash flows.

There is a generally accepted concept in finance theory called the
time value of money (Ross, Westerfield \& Jaffe 2002; Bierman Jr \&
Smidt 2012; Delaney, Rich \& Rose 2016; Petters \& Dong 2016; Marty
2017; Olson \& Bailey 1981; Becker \& Mulligan 1997; Loewenstein \&
Prelec 1991; Figure \ref{fig:Time-Value-of})\footnote{\begin{doublespace}
\label{Time Value of Money}The time value of money is a theory that
suggests a greater benefit of receiving money now rather than later.
It is founded on time preference. \href{https://en.wikipedia.org/wiki/Time_value_of_money}{Time Value of Money, Wikipedia Link}

\label{Time Preference, Wikipedia Link}In economics, time preference
(or time discounting, delay discounting, temporal discounting) is
the current relative valuation placed on receiving a good at an earlier
date compared with receiving it at a later date. \href{https://en.wikipedia.org/wiki/Time_preference}{Time Preference, Wikipedia Link}
\end{doublespace}
}. This idea can be simply stated as the fact that most people (and
perhaps even animals with non-monetary rewards: Hayden 2016) would
rather have a certain sum of money now rather than the same sum of
money later. This intuitively makes sense to most people, since we
are not sure whether we will receive that certain sum in the future,
due to the main uncertainties that the future holds. Though, it can
be easily seen that for people that can travel through time, money
would be the same whether now or later. As unlikely or likely the
possibility of time travel might be, it is mainly being used here
to illustrate further the notion of why money has different values
at different periods of time (section \ref{subsec:The-Limits,-The}).
\end{doublespace}
\begin{doublespace}

\subsection{\label{subsec:Present-Future-Value-Money}Present Value and Future
Value of Money}
\end{doublespace}

\begin{doublespace}
For completeness, and to act as a reference point, we summarize the
formula for the time value of money. More mathematically advanced
treatments can use the concept of discount functions in both discrete,
or, continuous time with exponential or hyperbolic discounting, among
other possibilities, including the usage of differential equations;
see: (Thaler 1981; Frederick, Loewenstein \& O'donoghue 2002; Marzilli
Ericson, White, Laibson \& Cohen 2015; Cruz Rambaud \& Ventre 2017;
Figure \ref{fig:Discount-Functions:-Hyperbolic})\footnote{\begin{doublespace}
\label{Discount Function} A discount function is used in economic
models to describe the weights placed on rewards received at different
points in time. \href{https://en.wikipedia.org/wiki/Discount_function}{Discount Function, Wikipedia Link}
\end{doublespace}
}\footnote{\begin{doublespace}
\label{enu:Hyperbolic-Discounting}In economics, hyperbolic discounting
is a time-inconsistent model of delay discounting. Hyperbolic discounting
is mathematically described as 
\[
{\displaystyle g(D)={\frac{1}{1+kD}}\,}
\]
 where $g\left(D\right)$ is the discount factor that multiplies the
value of the reward, $D$ is the delay in the reward, and $k$ is
a parameter governing the degree of discounting. This is compared
with the formula for exponential discounting: ${\displaystyle f\left(D\right)=e^{-kD}\,}$
. \href{https://en.wikipedia.org/wiki/Hyperbolic_discounting}{Hyperbolic Discounting, Wikipedia Link}
\end{doublespace}
}. The essence of the below equations (eqns: \ref{eq:PF-FV}; \ref{eq:PV-FV-Cumulative};
\ref{eq:PV-FV-Discount-Functions}; \ref{eq:PV-FV-Discount-Growth}),
with regard to money are that money in the future decreases in value
when it is measured in the present, or brought into the present, since
the interest rate is usually non-negative, $i\geq0$; to be precise,
let us term this discounting\footnote{\begin{doublespace}
\label{enu:Componding-or-Discounting-Justification}Compounding or
Discounting arises due to two possibilities (which we can also view
as principles in this case) one of which is direct and the other is
indirect. 
\end{doublespace}
\begin{itemize}
\begin{doublespace}
\item When we compound we are accumulating interest, which is the direct
return we get that justifies this increasing value we assign to money
that we have today, which we could potentially invest. Likewise, when
we discount a sum of money we are likely to receive in the future,
we are taking away interest, which upholds our rule to decrease the
value of money. 
\item The indirect possibility is that our investment could become more
valuable not just due to the interest we receive but due to other
factors wherein our investments could intrinsically do well becoming
more valuable (other benefits such as good will, reduced taxes etc.
can occur due to making investments), which could act as indirect
returns to us when we have ownership in the investment. Similarly,
when we discount we are taking into account the risk that we might
never receive that sum of money in the future, which makes the value
of money lesser in an indirect way.
\item For simplicity, the general practice is to club together all these
possibilities into one number called the discount rate, which is related
to interest rates, inflation and risk among other things. This also
tells us that the rate for compounding and discounting could be different.
\end{doublespace}
\end{itemize}
}. Likewise, money from the present when it is to be valued in the
future, or taken into the future, increases in value; again for precision
sake, let us call it compounding. 

\begin{equation}
PV\ =\ \frac{FV}{\left(1+i\right)^{n}}\Longleftrightarrow FV\ =\ PV\left(1+i\right)^{n}\label{eq:PF-FV}
\end{equation}
Here, $PV$ (present value) is the value of money at $time=0$ or
at the present moment. $FV$ (future value) is the value of money
at $time=n$ or in the future. $n$ is the number of time periods
(not necessarily an integer). $i$ is the rate (interest rate) at
which money compounds each period. The cumulative present value of
multiple future cash flows can be calculated by summing the contributions
of $FV_{t}$, the value of future cash flows at time $t$, 
\begin{equation}
PV\ =\ \sum_{{t=1}}^{{n}}{\frac{FV_{{t}}}{(1+i)^{t}}}\label{eq:PV-FV-Cumulative}
\end{equation}
Expressed using discount functions, $f\left(\cdots\right)$, which
for money is less than one, that is $f\left(i,n\right)=1/\left(1+i\right)^{n}\leq1$,
we can write this as,
\begin{equation}
PV\ =\ FV\,f\left(i,n\right)\Longleftrightarrow FV\ =\ \frac{PV}{f\left(i,n\right)}\label{eq:PV-FV-Discount-Functions}
\end{equation}
We could include the growth rate of money, $g$, to depict any increase
or decrease in value not captured by interest rates. That is $g$
would be the growth rate of money over each time period. To get sensible
results, we usually require $g<i$ though $g$ can be positive or
negative. This would change the discount function, $f\left(\cdots\right)$,
as follows: $f\left(i,g,n\right)=\left(1+g\right)^{n}/\left(1+i\right)^{n}\leq1$
\begin{equation}
PV\ =\ FV\,f\left(i,g,n\right)\Longleftrightarrow FV\ =\ \frac{PV}{f\left(i,g,n\right)}\label{eq:PV-FV-Discount-Growth}
\end{equation}

\end{doublespace}
\begin{doublespace}

\subsection{\label{subsec:Once-On,-Never}Once On, Never Off}
\end{doublespace}

\begin{doublespace}
When an opportunity to make money is known and if many people become
aware of this opportunity, it usually disappears (Brealey, Myers,
Allen \& Mohanty 2012; Foot-notes \ref{Arbitrage, Wikipedia Link};
\ref{Law of One Price, Wikipedia Link}). \textbf{\textit{In contrast
to money (making), the most wonderful thing about knowledge (creation)
is that if more people know about it, the more switched on it will
be.}} Once we become acquainted with a connection, (either ourselves,
or due to the guidance of someone else), we generally see other links;
we either put a spin on the connection by relating it to other elements,
(new connections), or, we find other characteristics of the same connection,
which by our (definition \ref{def:Definition-Knowledge-is-a}) can
be viewed as new connections. Hence as more people become aware of
any piece of knowledge, they add more pieces of knowledge to it making
the overall body of knowledge grow with time.

That being said, the dissimilarities between money and knowledge do
not end there. \textbf{\textit{Money is not like knowledge in many
ways}}. For most of us money is quantifiable but knowledge is not.
It is hard to individuate or shred knowledge into bits and pieces
in a way that is convincing to most. Money requires collective agreement
and it has value due to this consensus; true knowledge does not require
acceptance, though recognition of it does. That knowledge has value
is usually not questioned, but to persuade everyone that knowledge
is like money is a tough ask; except perhaps for die-hard classical
economists, who can will themselves to believe that everything has
monetary value (Sandel 2012).

To overcome some of these objections, we start by looking at the current
mechanisms through which we capture, represent and store knowledge.
Knowledge is collected by doing research (Creswell 2008)\footnote{\begin{doublespace}
\label{enu:Research-has-been}Research has been defined in a number
of different ways, and while there are similarities, there does not
appear to be a single, all-encompassing definition that is embraced
by all who engage in it. \href{https://en.wikipedia.org/wiki/Research}{Research, Wikipedia Link}
\end{doublespace}
\begin{itemize}
\begin{doublespace}
\item One definition of research is used by the OECD, \textquotedbl Any
creative systematic activity undertaken in order to increase the stock
of knowledge, including knowledge of man, culture and society, and
the use of this knowledge to devise new applications.\textquotedbl{}
\item Another definition of research is given by John W. Creswell (Creswell
2002), who states that \textquotedbl research is a process of steps
used to collect and analyze information to increase our understanding
of a topic or issue\textquotedbl . It consists of three steps: pose
a question, collect data to answer the question, and present an answer
to the question.
\item The Merriam-Webster Online Dictionary defines research in more detail
as \textquotedbl studious inquiry or examination; especially : investigation
or experimentation aimed at the discovery and interpretation of facts,
revision of accepted theories or laws in the light of new facts, or
practical application of such new or revised theories or laws\textquotedbl{}
\item These definitions of research ignore the possibility that research
can be undertaken without the aim of devising new applications or
consciously collecting data and outside the defined boundaries of
our understanding of experiments and investigative processes; and
surely in ways we don't know how research can be done.
\item There is a ton of literature on how to make research economically
useful (Pavitt 1991); what are benefits and costs of research collaboration
at the individual, firm and international levels in the past, present
and future (Katz \& Martin 1997; Melin 2000; Beaver 2001; Miotti \&
Sachwald 2003); 
\end{doublespace}
\end{itemize}
} and the most fundamental units or tools of research can be thought
of as questions and answers. Knowledge is represented as concepts,
ideas or in a more rigorous manner as mathematical theorems. Knowledge
is deposited in journals, books, articles and the like. To provide
a monetary analogy, this would be like the different currencies, denominations
(dollars, cents, etc) and forms (credit card, cheques, cash, gold,
etc.) we use for money and store it in banks, safes at home, under
our mattress or wherever at times.

\textbf{\textit{This}} \textbf{\textit{simplification allows us to
count (introduce numbers) or use numerical methods to measure knowledge}}
that we have accumulated in familiar places \footnote{\begin{doublespace}
\label{Knowledge-Not-Seen}For brevity we ignore knowledge outside
the boundaries of our knowledge stores but available everywhere; for
example: growing on trees, hanging on statues, and so on, both literally
and figuratively. We justify this preclusion by stating that for most
of us there are no easy ways to snatch and store knowledge present
in kaleidoscopic forms; for those of us unable to spot knowledge handed
to us in papers (perhaps, sent to us for review), that do not match
our exact templates, page limits, artificial discipline boundaries
we have imposed and so no, it would be a hard ask to find knowledge
lying under a paddy field. It would be even harder to assess the impact
over time of this knowledge that has not yet been seen.
\end{doublespace}
}. The impact of this knowledge or its worth would be in how many new
connections it will spawn or how it will be connected to other elements
of knowledge as time passes. This is a conservative approach as it
would under-count and undervalue the knowledge we have gathered. We
are ignoring the links to knowledge outside our stockpiles since we
do not explicitly consider this knowledge outside our familiar stash.
We would like to highlight that we are not trying to put an exact
monetary value on knowledge. For example, if someone has come up with
a new theorem and published it in an article, our valuation is not
the exact amount of money to pay for their contribution; though our
techniques are a numerical approach at putting a value, they will
not provide a literal price for any knowledge transaction (our main
result, theorem \ref{thm:Main-Result-}, makes this clear).

\textbf{\textit{Another core problem is to attempt to value knowledge
only in terms of time elapsed}}. But if anything, as time passes the
breadth covered by any piece of knowledge, how well it integrates
into its related topics (or seemingly unrelated fields; assumption
\ref{assu:Hence,-as-a-field-categorization}) and how much support
it has, will increase as new connections will get added. With additional
breadth, value will increase. This can be measured by $g$, the growth
rate of knowledge over each time period. Even by ignoring breadth
we are putting a lowest possible estimate on the value of knowledge.
It is possible that we have not yet uncovered additional connections,
that increase the breadth of applicability of this knowledge. If we
have reasons to believe that certain knowledge gets stale over time,
(section \ref{subsec:Relativity-New-Knowledge}) considers the plausibility
of this alternative.
\end{doublespace}
\begin{doublespace}

\subsection{\label{subsec:Relativity-New-Knowledge}The Relativity of Generally
New, But Not So Specially New Theories of Knowledge}
\end{doublespace}

\begin{doublespace}
We consider the possibility that new knowledge could replace old knowledge
making it obsolete. (Shapere 1980; 1989) are fascinating accounts
of scientific change; (Fuchs 1993) considers a sociological theory
of scientific change that can address many problems including the
relationship between the natural and social sciences; (Laudan, Laudan
\& Donovan 1988), which is housed in (Donovan \& Laudan 2012), a larger
collection of empirical studies on scientific change, subject key
claims of some of the theories of scientific change to the kind of
empirical scrutiny that has been so characteristic of science itself;
(Laudan, ... \& Wykstra 1986) consider the role of historians and
their views on evaluating and improving theories of scientific change;
also relevant would be the many discussions on scientific realism:
(Bhaskar 1998; Psillos 2005; Bhaskar 2009; Smart 2014; Sankey 2016)\footnote{\begin{doublespace}
\label{enu:Scientific-realism-is}Scientific realism is the view that
the universe described by science is real regardless of how it may
be interpreted. \href{https://en.wikipedia.org/wiki/Scientific_realism}{Scientific Realism, Wikipedia Link}

\label{enu:A-paradigm-shift,}A paradigm shift, a concept identified
by the American physicist and philosopher Thomas Kuhn, is a fundamental
change in the basic concepts and experimental practices of a scientific
discipline. \href{https://en.wikipedia.org/wiki/Paradigm_shift}{Paradigm Shift, Wikipedia Link}
\end{doublespace}
}.

It is certainly tempting to think that newer more accurate theories
would render the older less accurate theories covering the same domain
as less useful. Newtonian mechanics (Halliday \& Resnick 1967)\footnote{\begin{doublespace}
\label{enu:Classical-or-Newtonian}Classical or Newtonian mechanics
provides extremely accurate results when studying large objects that
are not extremely massive and speeds not approaching the speed of
light. When the objects being examined have about the size of an atom's
diameter, it becomes necessary to introduce the other major sub-field
of mechanics: quantum mechanics (Shankar 2012). To describe velocities
that are not small compared to the speed of light, special relativity
is needed. In case that objects become extremely massive, General
relativity becomes applicable. \href{https://en.wikipedia.org/wiki/Classical_mechanics}{Newtonian Mechanics, Wikipedia Link}

\label{enu:Quantum-mechanics}Quantum mechanics (also known as quantum
physics, quantum theory, the wave mechanical model, or matrix mechanics),
including quantum field theory, is a fundamental theory in physics
which describes nature at the smallest scales of energy levels of
atoms and subatomic particles. \href{https://en.wikipedia.org/wiki/Quantum_mechanics}{Quantum Mechanics, Wikipedia Link}

\label{enu:General-relativity}General relativity (also known as the
general theory of relativity) is the geometric theory of gravitation
published by Albert Einstein in 1915 and the current description of
gravitation in modern physics. \href{https://en.wikipedia.org/wiki/General_relativity}{General Relativity, Wikipedia Link}
\end{doublespace}
} was incredibly useful and valuable but seems to have become less
so once Einstein’s theory of special relativity (Einstein 1956)\footnote{\begin{doublespace}
\label{enu:Special-relativity}Special relativity (also known as the
special theory of relativity) is the generally accepted and experimentally
well-confirmed physical theory regarding the relationship between
space and time. \href{https://en.wikipedia.org/wiki/Special_relativity}{Special Relativity, Wikipedia Link}
\end{doublespace}
}replaced it to cover the cases of objects moving at speeds closer
to the speed of light.
\end{doublespace}
\begin{itemize}
\begin{doublespace}
\item \textbf{\textit{First, we would like to distinguish between theory
and knowledge.}}
\end{doublespace}
\end{itemize}
\begin{doublespace}
We will continue with our mechanics example, but it will become clear
that this distinction applies more generally. A theory is a model\footnote{\begin{doublespace}
(Diallo, Padilla, Bozkurt \& Tolk 2013) make the case that theory
can be captured as a model.
\end{doublespace}
}, or a simplification of reality, based on various assumptions and
valid for a limited range of conditions (for discussions about the
use and abuse of models, see: Morgan \& Morrison 1999; Derman 2011;
Rescigno, Beck \& Thakur 1987)\footnote{\begin{doublespace}
\label{enu:A-theory-is}A theory is a contemplative and rational type
of abstract or generalizing thinking, or the results of such thinking.
\href{https://en.wikipedia.org/wiki/Theory}{Theory, Wikipedia Link}

\label{enu:A-conceptual-model}A conceptual model is a representation
of a system, made of the composition of concepts which are used to
help people know, understand, or simulate a subject the model represents.
\href{https://en.wikipedia.org/wiki/Conceptual_model}{Conceptual Model, Wikipedia Link}
\end{doublespace}
}. In Physics the models are usually mathematical that apply under
certain physical criteria \footnote{\begin{doublespace}
(Greca \& Moreira 2002) discuss the relationships among physical,
mathematical, and mental models in the process of constructing and
understanding physical theories.
\end{doublespace}
}. Newton's laws are mathematical models that are limited to non-relativistic
speeds (speeds much lower than the speed of light) and low gravitational
fields, and within those limits they provide exceptionally accurate
answers.

Saying that Newton was proved wrong by Einstein would be incorrect
in any sense. What relativity did was to expand the range of physical
conditions over which we can explain the movement of objects. Special
relativity extended the range to include high speeds, and general
relativity extended it again to include high gravitational fields.
At present, we do not seem to need theories that explain motion above
the speed of light\footnote{\begin{doublespace}
\label{enu:Tachyon}There is some awareness and much speculation about
objects that can travel faster than light. A tachyon or tachyonic
particle is a hypothetical particle that always moves faster than
light. Most physicists believe that faster-than-light particles cannot
exist because they are not consistent with the known laws of physics
(Bilaniuk, Deshpande \& Sudarshan 1962; Randall 2006). But surely,
beliefs change as unknown laws become known or as our knowledge increases.
\href{https://en.wikipedia.org/wiki/Tachyon}{Tachyon, Wikipedia Link}
\end{doublespace}
} and even general relativity is not applicable everywhere because
it fails at singularities like the center of black holes (Iyer \&
Bhawal 2013)\footnote{\begin{doublespace}
\label{enu:A-black-hole}A black hole is a region of space-time exhibiting
such strong gravitational effects that nothing, not even particles
and electromagnetic radiation such as light, can escape from inside
it. At the center of a black hole, as described by general relativity,
lies a gravitational singularity, a region where the space-time curvature
becomes infinite. \href{https://en.wikipedia.org/wiki/Black_hole}{Black Hole, Wikipedia Link}
\end{doublespace}
}. \textbf{\textit{To summarize, as new theories emerge the criteria
under which we can apply them to obtain answers, or knowledge, becomes
much larger.}}
\end{doublespace}
\begin{itemize}
\begin{doublespace}
\item \textbf{\textit{Second, we would like to emphasize that theory is
not knowledge}}\textbf{.}
\end{doublespace}
\end{itemize}
\begin{doublespace}
From (definition \ref{def:Definition-Knowledge-is-a}), Knowledge
exists independent of our awareness of it. Connections might exist
between objects without our knowledge of those relationships. Theory
only helps us to uncover or sense these connections. Then again, new
knowledge only becomes more useful by knowing the old knowledge or
connection, which shows how the new knowledge might be better and
in which situations the new theory would apply. If old knowledge is
forgotten or lost, new knowledge, might need to rediscover the old
connections, before its value becomes enhanced.
\end{doublespace}
\begin{itemize}
\begin{doublespace}
\item \textbf{\textit{Lastly, if any theory is shown to be completely incorrect,
it was never true knowledge in the first place. But old incorrect
theories are necessary since they will help us recognize what new
knowledge cannot be.}}
\end{doublespace}
\end{itemize}
\begin{doublespace}
When any theory becomes known to be wrong, we cannot discard it entirely.
We still need to know the old erroneous theory as we search for a
better understanding of the cosmos using the new theory that has replaced
the old theory as our tool. This is because, as we build upon the
new theory to overcome any limitations it might have and come up with
newer theories, we need to know what theories will not work. This
means, new knowledge becomes more valuable in yielding newer knowledge
when used together with old knowledge, even if it is now shown to
be wrong since it tells us what not to do. This can also be compared
to trial and error learning. We try and fail many times, but at last
when we succeed we have avoided the many ways in which we have failed
earlier to succeed this time\footnote{\begin{doublespace}
\label{enu:Taleb and Kahneman discuss Trial and Error / IQ Points}Such
unwanted outcomes creep up because we live in a world that requires
around 2000 IQ points, to consistently make correct decisions; but
the smartest of us has only a fraction of that (Ismail 2014). Hence,
we need to rise above the urge to ridicule the seemingly obvious blunders
of others, since without those marvelous mistakes the path ahead will
not become clearer for us. As Taleb explains, ``it is trial with
small errors that leads to progress''. That being said, if there
are big errors that might incapacitate the person trying the trial
from further trials; as long as someone else has observed the attempts
with huge errors, the rest of society benefits from it; assuming,
of course, that the big blow up has left a non-trivial portion of
society intact, or at-least not too shaken up. (Ismail 2014) mentions
the following quote from Taleb, “Knowledge gives you a little bit
of an edge, but tinkering (trial and error) is the equivalent of 1,000
IQ points. It is tinkering that allowed the industrial revolution''.
\href{https://www.youtube.com/watch?v=MMBclvY_EMA}{Nassim Taleb and Daniel Kahneman discuss Trial and Error / IQ Points, among other things, at the New York Public Library on Feb 5, 2013.}.
\end{doublespace}
}. An example of this would be about finding a way (or connection)
from one landmark in a city to another. It is easier to find newer
better paths if we know older worse paths that tell us which routes
not to take.

We will require newer and better theories as our understanding of
the world improves or as our knowledge increases. (Question \ref{que:Goal-Knowledge})
and associated remarks briefly consider the limiting case or the asymptotics
of increasing knowledge. The discussion above (sections \ref{subsec:Once-On,-Never};
\ref{subsec:Relativity-New-Knowledge}) suggests that the discount
function for knowledge should always be one or greater than one, stated
as the below axiom.
\end{doublespace}
\begin{ax}
\begin{doublespace}
\label{axm:Knowledge-never-decreases}Knowledge never decreases in
value. If it decreases in value, it was never knowledge to begin with.
This gives the following discount function (or weight function), $h\left(k,n\right)$,
for knowledge. $n$ is the number of time periods (not necessarily
an integer). $k$ is the rate (knowledge rate) at which knowledge
compounds each period.
\begin{equation}
h\left(k,n\right)=\left(1+k\right)^{n}\geq1\Rightarrow k\geq0\label{eq:Knowledge-Weight}
\end{equation}
\end{doublespace}
\end{ax}
\begin{rem}
\begin{doublespace}
Another way to look at change in knowledge is by considering the connections
any element of knowledge is likely to establish with other elements
of knowledge as time passes. The growth rate in the number of connections
to any element of knowledge can be expressed using the function in
(Eq. \ref{eq:Knowledge-Weight}). The discount function then tells
us how this piece of knowledge will link up with other elements of
knowledge. The discount function can be easily modified to include
the growth rate of knowledge, $g$, over each time period.
\end{doublespace}
\end{rem}
\begin{doublespace}
Instead of a discount function, since by the arguments in (sections
\ref{subsec:Once-On,-Never}; \ref{subsec:Relativity-New-Knowledge};
axiom \ref{axm:Knowledge-never-decreases}) knowledge never decreases
in value (irrespective of whether it moves forward or backward in
time), we will use the term weight. The weights to be used can be
based on the following assumptions:
\end{doublespace}
\begin{assumption}
\begin{doublespace}
\label{enu:Past-knowledge-is}Past knowledge is important the older
it is. That is knowledge from long ago is more valuable than knowledge
yesterday. 
\end{doublespace}
\end{assumption}
\begin{itemize}
\begin{doublespace}
\item This is because the knowledge we have today is built on the connections
we have created or discovered in the past. So the foundation for today
is the knowledge from long ago. The more knowledge we have from the
past the more consistent the knowledge we have today (that it is connected
to more elements making it important and crucial to new elements that
will get added); otherwise any alternate connections from the past
could render a large body of knowledge obsolete, but then again, only
by knowing alternate chains of connections, which though fallow, we
can create better connections. So knowledge is never wasted and it
never loses value.
\item A subtle point that we need to understand is that knowledge or the
connections between elements could stay the same over time, while
acknowledging that the awareness of the connections could be time
varying. Time is the dimension we do not have control over as per
our current understanding and methods of navigating the cosmos (section
\ref{subsec:The-Limits,-The} has more details). As we move through
time we uncover new connections. This implies that someone could discover
all the connections between various elements at some point in time,
but another person could come to know about these same connections
at another point in time. Their overall knowledge is the same, but
this awareness has happened at different points in time, which is
a constraint due to the physical limitations we presently have in
our Universe. The assumptions we make here is with respect to the
total recorded knowledge that is presently accumulated by all knowledge
seekers.
\item We can make a related assumption that knowledge becomes more important
at an increasing rate, going back from today (Rosser \& Lis 2016;
Courant, Robbins \& Stewart 1996; Wagenaar \& Sagaria 1975; Meadows,
Meadows, Randers \& Behrens 1972; Leike 2001; Foot-note \ref{Exponential Growth};
Figure \ref{fig:Exponential-Growth-Decay}). When any piece of knowledge
becomes most useful, its weight at that time can be captured as the
position of the center of the peak of a Gaussian function, or the
mean of a normal distribution and the weight at other times could
taper off similar to the normal density function: (Rao 1973; Cramér
2016; Foot-note \ref{Gaussian Function}; Figure \ref{fig:Gaussian-Weight-Functions}).
\end{doublespace}
\end{itemize}
\begin{assumption}
\begin{doublespace}
\label{enu:Future-knowledge-is}Future knowledge is important the
closer it is to us. That is knowledge that is far away from today
is less valuable than knowledge about tomorrow. 
\end{doublespace}
\end{assumption}
\begin{itemize}
\begin{doublespace}
\item If future knowledge cannot be immediately connected to the knowledge
we have now, or at any point in time, we fail to appreciate its value.
For example, if we took the knowledge of differential equations to
the past before we had knowledge of algebra or before zero was being
used, it would be very hard to see the importance of calculus. 
\item While this assumption seems counter intuitive, let us clarify with
an example: If someone tells us (correctly is another assumption which
assumes they are using the right theory capable of making this prediction)
that the world will end in 10 seconds versus 10 years. It would seem
that it does no good to know that the world will end in 10 seconds,
but knowing exactly when it will end in 10 years is incredibly valuable.
In this latter case, it would seem that future knowledge is more important
the ``further'' away from us it is, because there’s more that we
can do in preparation for the end of the world. But the knowledge
that the world will end in 10 years has to be accompanied by the knowledge
that it will not end in every smallest fraction of time before those
ten years. Without this knowledge, it is no good that we prepare for
ten years only to realize the world is gone well before those ten
years and will reestablish itself, perhaps in the same or in an altered
form. It is even likely that this could happen multiple times in those
ten years and hence all the knowledge within those ten years is essential
and to be more precise, more essential than the knowledge ten years
away.
\item We can make a related assumption that knowledge becomes less important
at an increasing rate going forward from today (Figure \ref{fig:Exponential-Growth-Decay})\footnote{\begin{doublespace}
\label{Exponential Growth}Exponential growth is exhibited when the
rate of change—the change per instant or unit of time—of the value
of a mathematical function is proportional to the function's current
value, resulting in its value at any time being an exponential function
of time, i.e., a function in which the time value is the exponent.
\href{https://en.wikipedia.org/wiki/Exponential_growth}{Exponential Growth, Wikipedia Link};
\href{http://mathworld.wolfram.com/ExponentialGrowth.html}{Exponential Growth, Mathworld Link}.
The exponential function ${\displaystyle x(t)=x(0)e^{kt}}\;,\;k>0$
, satisfies the linear differential equation: 
\begin{equation}
\!\,{\frac{dx}{dt}}=kx
\end{equation}
saying that the change per instant of time of $x$ at time $t$ is
proportional to the value of $x(t)$, and $x(t)$ has the initial
value $x(0)$. In the above differential equation, if $k<0$, then
the quantity experiences exponential decay.

\label{Exponential Decay}A quantity is subject to exponential decay
if it decreases at a rate proportional to its current value. \href{https://en.wikipedia.org/wiki/Exponential_decay}{Exponential Decay, Wikipedia Link};
\href{http://mathworld.wolfram.com/ExponentialDecay.html}{Exponential Decay, Mathworld Link}.
Symbolically, this process can be expressed by the following differential
equation, where $N$ is the quantity and $\lambda$ (lambda) is a
positive rate called the exponential decay constant: 
\begin{equation}
{\frac{dN}{dt}}=-\lambda N
\end{equation}
 The solution to this equation is: 
\begin{equation}
{\displaystyle N(t)=N_{0}e^{-\lambda t}}
\end{equation}
where $N(t)$ is the quantity at time $t$, and $N_{0}=N(0)$ is the
initial quantity, i.e. the quantity at time $t=0$.
\end{doublespace}
}; similar to what we discussed in the point above, (assumption \ref{enu:Past-knowledge-is}),
for past knowledge when any piece of knowledge becomes most useful
its weight at that time can be represented as the position of the
center of the peak of a Gaussian function, or the mean of a normal
distribution and the weight at other times could taper off similar
to the normal density function: (Figure \ref{fig:Gaussian-Weight-Functions})\footnote{\begin{doublespace}
\label{Gaussian Function} In mathematics, a Gaussian function, often
simply referred to as a Gaussian, is a function of the form: 
\begin{equation}
{\displaystyle f(x)=ae^{-{\frac{(x-b)^{2}}{2c^{2}}}}}
\end{equation}
 for arbitrary real constants $a,\;b$ and non zero $c$. It is named
after the mathematician Carl Friedrich Gauss. The graph of a Gaussian
is a characteristic symmetric \textquotedbl bell curve\textquotedbl{}
shape. \href{https://en.wikipedia.org/wiki/Gaussian_function}{Gaussian Function, Wikipedia Link};
\href{http://mathworld.wolfram.com/GaussianFunction.html}{Gaussian Function, Mathworld Link}

\label{Normal Distribution}In probability theory, the normal (or
Gaussian or Gauss or Laplace–Gauss) distribution is a very common
continuous probability distribution. The normal distribution is sometimes
informally called the bell curve. However, many other distributions
are bell-shaped (such as the Cauchy, Student's t, and logistic distributions),
\href{https://en.wikipedia.org/wiki/Normal_distribution}{Normal Distribution, Wikipedia Link}.
The probability density of the normal distribution is:
\begin{equation}
{\displaystyle f(x\mid\mu,\sigma^{2})={\frac{1}{\sqrt{2\pi\sigma^{2}}}}e^{-{\frac{(x-\mu)^{2}}{2\sigma^{2}}}}}
\end{equation}
where $\mu$ is the mean or expectation of the distribution (and also
its median and mode),$\sigma$ is the standard deviation, and $\sigma^{2}$
is the variance.
\end{doublespace}
}.
\item It is of course possible that knowledge from different times in the
far away future, that can shed light on connections we are working
with today, can become valuable; so there can be weights for future
knowledge that have multiple regions of significant value. This is
expressed by the use of multi-modal probability distributions: (Cramér
2016; Figure \ref{fig:Multi-Modal-Weight-Functions})\footnote{\begin{doublespace}
\label{Multimodal Distribution, Wikipedia Link}In statistics, a multi-modal
distribution is a continuous probability distribution with two or
more different modes. These appear as distinct peaks (local maxima)
in the probability density function, \href{https://en.wikipedia.org/wiki/Multimodal_distribution}{Multimodal Distribution, Wikipedia Link}.
Bi-modal distributions are an important sub-class, an important example
of a bi-modal distribution is the beta distribution. The probability
density function of the beta distribution, for $0\leq x\leq1$, and
shape parameters $\alpha,\beta>0$, is a power function of the variable
$x$ and of its reflection $(1-x)$ as follows: 
\begin{equation}
{\displaystyle {\begin{aligned}f(x;\alpha,\beta) & =\mathrm{constant}\cdot x^{\alpha-1}(1-x)^{\beta-1}\\[3pt]
 & ={\frac{1}{\mathrm{B}(\alpha,\beta)}}x^{\alpha-1}(1-x)^{\beta-1}
\end{aligned}
}}
\end{equation}
where the beta function, $\mathrm{B}$, is a normalization constant
to ensure that the total probability is $1$. In the above equations
$x$ is a realization, an observed value that actually occurred, of
a random process $X$. The beta function, also called the Euler integral
of the first kind, is a special function defined by 
\begin{equation}
{\displaystyle \mathrm{B}(x,y)=\int_{0}^{1}t^{x-1}(1-t)^{y-1}\,dt}\quad;x,y\text{ can be complex numbers with their real part greater than zero.}
\end{equation}
\end{doublespace}
}.
\end{doublespace}
\end{itemize}
\begin{assumption}
\begin{doublespace}
\label{enu:Certain-knowledge-could}Certain knowledge could be useful
at certain points in time, and then decays around a certain point
of high importance. 
\end{doublespace}
\end{assumption}
\begin{itemize}
\begin{doublespace}
\item We view this as impulse functions over an instantaneous period of
time or as impulse response functions whose effect persists over slightly
longer intervals of time around the time of the shock. That is when
certain knowledge becomes extremely valuable over a localized duration
of time and lesser in value (negligible in comparison to its value
when it is highest) as we move away that region of time (Nikodým 1966;
Hassani 2009; Hamilton 1994; Figure \ref{fig:Impulse-Functions};
\ref{fig:Impulse-Function-with-Linear})\footnote{\begin{doublespace}
\label{Dirac Delta or Impulse Function}In mathematics, the Dirac
delta function ($\delta$ function or impulse function) is a generalized
function or distribution introduced by the physicist Paul Dirac. It
is used to model the density of an idealized point mass or point charge
as a function equal to zero everywhere except for zero and whose integral
over the entire real line is equal to one, \href{https://en.wikipedia.org/wiki/Dirac_delta_function}{Dirac Delta or Impulse Function, Wikipedia Link};
\href{http://mathworld.wolfram.com/DeltaFunction.html}{Dirac Delta or Impulse Function, Mathworld Link}.
The Dirac delta can be loosely thought of as a function on the real
line which is zero everywhere except at the origin, where it is infinite,
\begin{equation}
\delta(x)={\begin{cases}
+\infty, & x=0\\
0, & x\neq0
\end{cases}}
\end{equation}
and which is also constrained to satisfy the identity 
\begin{equation}
\int_{-\infty}^{\infty}\delta(x)\,dx=1
\end{equation}

\end{doublespace}
\begin{enumerate}
\begin{doublespace}
\item This is merely a heuristic characterization. The Dirac delta is not
a function in the traditional sense as no function defined on the
real numbers has these properties. The Dirac delta function can be
rigorously defined either as a distribution or as a measure.
\item A closer look at the impulse function should tells us that we can
also consider suitable impulse response functions to capture the weight
we are assigning to how the value of any particular knowledge might
change over time.
\end{doublespace}
\end{enumerate}
\begin{doublespace}
\label{enu:Impulse-Response-Function}In signal processing, the impulse
response, or impulse response function, of a dynamic system is its
output when presented with a brief input signal, called an impulse.
Impulse response functions describe the reaction of endogenous macroeconomic
variables such as output, consumption, investment, and employment
at the time of the shock and over subsequent points in time. \href{https://en.wikipedia.org/wiki/Impulse_response}{Impulse Response, Wikipedia Link}

\label{enu:A-heuristic-technique}A heuristic technique (Ancient Greek:
\textquotedbl find\textquotedbl{} or \textquotedbl discover\textquotedbl ),
often called simply a heuristic, is any approach to problem solving,
learning, or discovery that employs a practical method, not guaranteed
to be optimal, perfect, logical, or rational, but instead sufficient
for reaching an immediate goal. \href{https://en.wikipedia.org/wiki/Heuristic}{Heuristic, Wikipedia Link}

\label{enu:Distributions-(or-generalized-functions)}Distributions
(or generalized functions) are objects that generalize the classical
notion of functions in mathematical analysis. Distributions make it
possible to differentiate functions whose derivatives do not exist
in the classical sense. \href{https://en.wikipedia.org/wiki/Distribution_(mathematics)}{Distributions or Generalized Functions, Wikipedia Link}

\label{enu:Measure-Mathematical-Analysis}In mathematical analysis,
a measure on a set is a systematic way to assign a number to each
suitable subset of that set, intuitively interpreted as its size.
In this sense, a measure is a generalization of the concepts of length,
area, and volume. Technically, a measure is a function that assigns
a non-negative real number or $+\infty$ to (certain) subsets of a
set $X$. \href{https://en.wikipedia.org/wiki/Measure_(mathematics)}{Measure, Wikipedia Link}
\end{doublespace}
}.
\end{doublespace}
\end{itemize}
\begin{doublespace}

\subsection{\label{subsec:Time-Value-of}Knowledge Valuation}
\end{doublespace}

\begin{doublespace}
With the above framework, which consists of various assumptions, definitions,
notation and terminology (all of which are summarized in appendices
\ref{sec:Appendix-Definitions-Assumptions}; \ref{sec:Notation-and-Terminology})
we present the answer to our primary research question (question \ref{que:What-Knowledge-Value})
as (theorem \ref{thm:Main-Result-}) below.
\end{doublespace}
\begin{thm}
\begin{doublespace}
\label{thm:Main-Result-} The value of all knowledge, or the value
of any knowledge at any point in time is growing. This means the past,
present and future value of every piece of knowledge is infinity.
\[
PRVK\ \triangleq\ FUVK\,h\left(k,n\right)=\infty\Longleftrightarrow FUVK\ \triangleq\ PRVK\,\bar{h}\left(\bar{k},n\right)=\infty
\]
\[
PRVK\ \triangleq\ PAVK\,h\left(k,n\right)=\infty\Longleftrightarrow PAVK\ \triangleq\ PRVK\,\bar{h}\left(\bar{k},n\right)=\infty
\]
\[
PAVK\ \triangleq\ FUVK\,h\left(k,n\right)=\infty\Longleftrightarrow FUVK\ \triangleq\ PAVK\,\bar{h}\left(\bar{k},n\right)=\infty
\]
Here, $PAVK,PRVK$ and $FUVK$ denote the past (PA), present (PR)
and future (FU) Value of any element of Knowledge. The notation $t\triangleq m$
means ``$t$ is equal by definition (under certain conditions) to
$m$ ''. $n$ is the number of time periods (not necessarily an integer).
$k$, $\bar{k}$ are the rates (knowledge rates) at which knowledge
compounds each period. $h\left(k,n\right)$ and $\bar{h}\left(\bar{k},n\right)$
are the weight functions and they satisfy the property given in (axiom
\ref{axm:Knowledge-never-decreases}) as below,
\[
h\left(k,n\right)=\left(1+k\right)^{n}\geq1\Rightarrow k\geq0
\]
\[
\bar{h}\left(\bar{k},n\right)=\left(1+\bar{k}\right)^{n}\geq1\Rightarrow\bar{k}\geq0
\]
\end{doublespace}
\end{thm}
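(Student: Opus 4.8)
The plan is to reduce all six displayed identities to the single assertion that each of $PAVK$, $PRVK$, and $FUVK$ equals $\infty$; once this is shown, every relation of the form $X \triangleq Y\, h(k,n) = \infty$ holds automatically, since $h(k,n) = (1+k)^n \geq 1$ by Axiom \ref{axm:Knowledge-never-decreases} and $\infty \cdot c = \infty$ for any finite $c \geq 1$. The symmetric (barred) relations follow identically from $\bar{h}(\bar{k}, n) = (1+\bar{k})^n \geq 1$.

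First I would fix a base value. The scope set after the Type A and Type Z definitions restricts attention to a knowledge seeker with a non-zero valuation of every piece of knowledge; hence at any single reference instant each element of knowledge carries a strictly positive value $V_0 > 0$. Next I would pin down the growth factor: by Axiom \ref{axm:Knowledge-never-decreases} the weight function is $h(k,n) = (1+k)^n$ with $k \geq 0$, and the argument of Section \ref{subsec:Once-On,-Never}---that every act of learning spawns further connections, each of which is itself new knowledge by Definition \ref{def:Definition-Knowledge-is-a}---upgrades this to strict growth, $k > 0$.

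The decisive step is to argue that the number of compounding periods is unbounded. Because the connections constituting knowledge are inter-temporal (the remarks following Definition \ref{def:Definition-Knowledge-is-a}), any connection may recur, or be reused to forge a new connection, at an arbitrarily distant past or future instant; Assumptions \ref{enu:Past-knowledge-is} and \ref{enu:Future-knowledge-is} reinforce that the weight does not taper to zero as $n$ grows in either temporal direction. Consequently the valuation linking any two time points must be taken as $n \to \infty$, and $\lim_{n\to\infty} V_0 (1+k)^n = +\infty$ for $k > 0$. This yields $PRVK = \infty$ when anchored to a future base, and the same limit with the roles of past, present, and future permuted delivers $PAVK = FUVK = \infty$.

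I expect the main obstacle to be conceptual rather than computational: the limit itself is elementary, so the real work is justifying both the strict positivity $k > 0$ and, more importantly, the passage $n \to \infty$. The former rests on the claim that fresh connections are always eventually added, and the latter on the inter-temporal persistence and reusability of knowledge across unbounded time; neither is a formal hypothesis but each is underwritten by the preceding axiom, assumptions, and remarks. Care must also be taken that the barred relations are genuinely consistent rather than circular---but since the conclusion is that every quantity equals $\infty$, the apparent tension between ``older past knowledge is more valuable'' and the relation $PRVK = PAVK\, h(k,n)$ dissolves, as all inequalities collapse to the identity $\infty = \infty$.
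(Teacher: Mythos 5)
Your proposal is correct by the standards of this paper and shares its computational core: a strictly positive base value compounded at a strictly positive rate over infinitely many periods, so that $\lim_{n\rightarrow\infty}V_{0}\left(1+k\right)^{n}=\infty$. Two things differ, and both are worth noting. First, the decomposition: you reduce all six displayed identities to the single claim that $PAVK$, $PRVK$ and $FUVK$ are each infinite, after which every relation of the form $X\triangleq Y\,h\left(k,n\right)=\infty$ holds trivially because $h\geq1$; the paper (Appendix \ref{sec:Proof-of-Theorem}) instead proves each directed relation by its own limiting argument --- first $PRVK\triangleq\lim_{n\rightarrow\infty}FUVK\,h\left(k,n\right)$, then $FUVK\triangleq\lim_{n\rightarrow\infty}PRVK\,\bar{h}\left(\bar{k},n\right)$ --- and declares the four relations involving $PAVK$ analogous. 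Your reduction is logically tidier and makes explicit why the biconditionals are harmless (everything collapses to $\infty=\infty$), a point the paper leaves implicit. Second, and more substantively, the origin of $n\rightarrow\infty$: the paper fixes the two reference instants and divides the time between them into $n$ sub-intervals, letting the partition refine without bound, so the infinite compounding is driven by the density of intermediate connections between two fixed dates; you instead let the temporal horizon itself be unbounded, appealing to the inter-temporal recurrence and reusability of connections at arbitrarily remote instants. The paper's device buys an infinite present value even for knowledge arriving at a fixed, arbitrarily near future date, whereas your reading needs the horizon to be unbounded; in exchange, your justification leans directly on the stated remarks and Assumptions \ref{enu:Past-knowledge-is} and \ref{enu:Future-knowledge-is}, while the paper's partition-refinement step is not underwritten by any formal hypothesis. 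Both arguments supply the same two strict inequalities the limit requires --- $k>0$ via the connection-spawning argument of Section \ref{subsec:Once-On,-Never}, and strict positivity of the base value, which you ground in the Type A/Type Z scope restriction and the paper attributes to Axiom \ref{axm:Knowledge-never-decreases} --- so neither route is more rigorous than the other; yours is simply organized around the values, the paper's around the relations.
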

\begin{proof}
\begin{doublespace}
Appendix \ref{sec:Proof-of-Theorem}.
\end{doublespace}
\end{proof}
\begin{rem}
\begin{doublespace}
\textit{This result provides a lower bound for the value of knowledge.
The implications of this valuation exercise, which places a high premium
on any piece of knowledge, are to ensure that participants in any
knowledge system are better trained to notice the knowledge available
from any source.}
\end{doublespace}
\end{rem}
\begin{doublespace}
We clarify further as to what $PAVK,\;PRVK$ and $FUVK$ mean. $PAVK$
or the Past Value of Knowledge, would refer to the value in the past
of any knowledge that we have today or we are likely to find in the
future. $PRVK$ would refer to the present value of any knowledge
that we had from the past or we are likely to find in the future.
$FUVK$ would refer to the future value of any knowledge that we have
today or we had from the past. 
\end{doublespace}
\begin{rem}
\begin{doublespace}
When a decision maker is confronted with two or more choices or options,
all of which provide infinite value, he is indifferent between the
choices and hence a random selection from the options available would
suffice.

The policy implications based on this lower bound for all efforts
regarding knowledge creation (which is the one of the main purposes
or perhaps even the sole purpose of all journals, researchers, etc.)
are discussed in (Kashyap 2018). When there are constraints (time
and other resources) a random selection from many qualified papers
with inputs from reviewers to improve the randomly selected papers
will be a more fruitful outcome for everyone involved and for society
as well. (Kashyap 2017) has a discussion of how this randomizing approach
is part of a bigger series of solution techniques to counter the uncertainty
we encounter in our lives.
\end{doublespace}
\end{rem}
\begin{doublespace}

\subsection{\label{subsec:The-Limits,-The}The Limits, The Physics and The Motivation
for Time Travel}
\end{doublespace}

\begin{doublespace}
The concept of time value of money (or the time value of anything)
arises because we (most of us?) do not have a way to travel across
time. Physics provides us with a theoretical basis for moving across
time (Lewis 1976; Deutsch \& Lockwood 1994; Woodward 1995; Nahin 2001;
Gott 2002; Arntzenius \& Maudlin 2002; Kaku 2009; Nahin 2017), which
is one of the dimensions of the physical world we live in. We are
four dimensional creatures: latitude, longitude, height and time are
our dimensions since we need to know these four co-ordinates to fully
specify the position of any object in our universe. This is perhaps
best made clear to lay audiences with regards to physics, such as
many of us, by the movie Interstellar (Thorne 2014). Also (Sagan 2006)
has a mesmerizing account of many physical aspects, including how
objects or beings can transform between worlds that are governed by
higher or lower dimensions and change their shapes or their physical
form when they enter a lower dimensional world though their shape
is unchanged in the higher dimensional world. As they move from the
higher dimensions into the lower dimensions, they would need to obey
the physical laws of the lower dimension, or take the physical shape
or be limited by the properties prescribed by the lower dimensional
world. This also implies that changing the number of dimensions based
on which we sense our environment could alter our understanding of
what is around us.

Time is our last dimension since it is the one which we cannot control
or move around in. But we can change the other three co-ordinates
(the first three co-ordinates are the co-ordinates of space, and we
can change where we are in space) and hence we have three degrees
of freedom. This points to a possibility that, beings from a higher
dimension can travel through time and enter our three dimensional
world; but they would need to be limited by the physical rules as
dictated by the dimensions of our world. This also suggests that one
way to enter our dimension from a higher dimensional universe (or
even a lower dimension) would be to be born in our world and likewise,
the way to leave our universe to go into a higher (or lower) dimension,
might be to die. And while it might be hard to take physical material
from one universe to another. Thoughts, or knowledge, might be mobile
across dimensions (Suddendorf \& Corballis 2007; Suddendorf \& Busby
2003; Suddendorf \& Busby 2005; Boyer 2008; Botzung, Denkova \& Manning
2008; Suddendorf, Addis \& Corballis 2011). And birth (or rebirth)
would be one way to travel across time. We could speculate further
that the reason for this cycle of births and deaths might be to collect
the most precious commodity that we can think of, which is knowledge,
since it is the only entity that gains in value as it moves across
time.

Now getting back to knowledge, knowledge now or later would still
be valuable (Theorem \ref{thm:Main-Result-}). If anything, for a
time traveler that travels around time collecting knowledge instead
of money, knowledge from different time periods would hold more value,
in comparison to money, which would have the same value at any point
in time for this time traveler\footnote{\begin{doublespace}
\label{enu:Inflation-Time-Traveler}The money a time traveler has
would not require an adjustment for simple things like inflation or
interest rates since the time traveler will travel to the point in
time where he could get the maximum value for his money. Say he wants
coffee and he has one dollar, he will go to the date, when he can
get the coffee that maximizes his utility (Jehle \& Reny 2011; Foot-note
\ref{fn:The-utility-maximization}) both in terms of quantity and
quality.
\end{doublespace}
}. 
\end{doublespace}
\begin{question}
\begin{doublespace}
\label{que:Goal-Knowledge}What is the ultimate goal of seeking knowledge?
\end{doublespace}
\end{question}
\begin{rem}
\begin{doublespace}
The motivation to be knowledge seekers is to obtain an understanding
of the connections between everything in this universe, through time
and other possible higher dimensions. This state of possessing complete
knowledge could be termed as ``Enlightenment'' (for a discussion
from recent history, see: Kant 1784; Foucault 1984; Norris 1994; Kapferer
2007)\footnote{\begin{doublespace}
\label{enu:Enlightenment-is-the}Enlightenment is the \textquotedbl full
comprehension of a situation\textquotedbl . \href{https://en.wikipedia.org/wiki/Enlightenment_(spiritual)}{Enlightenment, Wikipedia Link}

\label{enu:Implication-Knowledge-Seeker}The other implications of
being a knowledge seeker are:
\end{doublespace}
\begin{itemize}
\begin{doublespace}
\item Being aware of the possibility of attaining complete knowledge, is
that attempts at change might appear highly overrated and overstated.
We are not saying that change is unnecessary, just that trying to
change something without understanding it completely (that is, before
attaining enlightenment) might be a recipe for a disaster. If we understood
something and then wish to change it that might be warranted, but
trying to change something without knowing it fully is ridiculous
at best. 
\item Definitions \ref{def:Definition-Knowledge-is-a}, \ref{def:Knowledge-machines-are}
suggest that everything in this universe has knowledge associated
with it and / or it is contributing to the efforts at understanding
everything else. This makes everything important, worthy of respect
and infinitely valuable (Kashyap 2017; 2018 consider this in the case
of publications, job interviews and school admissions).
\item An unintended yet welcome consequence of our efforts to understand
everything in the universe could be that we might just end up understanding
one another better, perhaps, becoming more tolerant in the process;
making us wonder whether the the true purpose of all knowledge creation
might be to simply make us more tolerant. If someone is completely
ignorant and still highly tolerant then they have achieved the same
end goal as someone who goes about understanding everything and then
becomes tolerant. This would also be the case when our type A and
type Z person end up in identical states.
\end{doublespace}
\end{itemize}
}.
\end{doublespace}
\end{rem}
\begin{doublespace}

\section{Conclusion}
\end{doublespace}

\begin{doublespace}
We have formalized a methodology to estimate the value of knowledge
and established a lower bound using well established principles from
finance. In a related paper (Kashyap 2018), we show that our methodology
has significant policy implications for all fields of research and
also for all efforts aimed at the creation and dissemination of knowledge.
If one of the main purposes or perhaps even the sole purpose of all
journals, researchers, etc. is knowledge creation, our valuation suggests
that the best way for journals to select submissions would be randomly
from a pool of papers meeting certain basic quality criteria. Such
a random selection from many qualified papers, (after factoring in
constraints such as reviewer load, number of possible publications
etc.), with inputs from reviewers to improve the randomly selected
papers will be a more fruitful outcome for everyone involved and for
society as well. In such a scenario, the editors and reviewers are
not looking for ways to reject a paper instead they are coaching the
authors to ensure a better final end product.

Despite any rough edges in our discussion\footnote{\begin{doublespace}
\label{enu:To-remind-ourselves}To remind ourselves about one thing
on closing: \textquotedbl The source of most (all) human conflict
(and misunderstanding) is not because of what is said (written) and
heard (read), but is partly due to how something is said and mostly
because of the difference between what is said and heard and what
is meant and understood …\textquotedbl{} Part of the problem is also
due to the use of a very imprecise language like English for communication.
If the only paper written thus far in recorded history (to the best
of our knowledge) to attempt to find the value of knowledge using
numerical techniques, does not find a place in every journal it is
submitted to, then journals (and researchers) have lost touch with
their roots which is to find and disseminate knowledge (no matter
what form and shape it comes in).
\end{doublespace}
}, our only hope is that this paper represents the first step towards
a formal mechanism for putting the value on something, which we deem
valuable, but fail to recognize it in most places we see it (Foot-note
\ref{enu:Taleb and Kahneman discuss Trial and Error / IQ Points}).
Just because we do not see a connection does not mean that there is
no connection. Newton discovered gravity, which has existed since
time immemorial, only when an apple fell on his head. Though the truth
behind this myth can be debated, the metaphor is relevant for us,
since we sometimes only notice things that hit us: (McKie \& De Beer
1951; Fara 1999)\footnote{\begin{doublespace}
\label{Newton, Gravity, Apple}Squirreled away in the archives of
London’s Royal Society is a manuscript containing the truth about
the apple. It is the manuscript for what would become a biography
of Newton entitled Memoirs of Sir Isaac Newton’s Life written by William
Stukeley, an archaeologist and one of Newton’s first biographers,
and published in 1752 (Stukeley 1936). Newton told the apple story
to Stukeley, who relayed it as such: “After dinner, the weather being
warm, we went into the garden and drank tea, under the shade of some
apple trees… he told me, he was just in the same situation, as when
formerly, the notion of gravitation came into his mind. It was occasion’d
by the fall of an apple, as he sat in contemplative mood. Why should
that apple always descend perpendicularly to the ground, thought he
to himself…”. \href{https://www.newscientist.com/blogs/culturelab/2010/01/newtons-apple-the-real-story.html}{Newton Gravity Apple Linke}
\end{doublespace}
}. We need to try harder and be more open to acknowledging the smallest
piece of new knowledge that might have been brought to light\footnote{\begin{doublespace}
\label{Deepavali} Knowledge is synonymous to light in many cultures
across the world; for someone that is able to differentiate between
darkness and light, a single ray of light can be the greatest ally
in keeping hope alive; MacMillan 2008; Newman 2017; One of the most
popular festivals of Hinduism, Diwali the festival of lights, symbolizes
the spiritual \textquotedbl victory of light over darkness, good
over evil and knowledge over ignorance''. \href{https://en.wikipedia.org/wiki/Diwali}{Deepavali, Festival of Lights, Wikipedia Link}
\end{doublespace}
}by anyone from anywhere about anything and when there are constraints
(time and other resources) a random selection from many qualified
papers will lead to better utilization of resources and greater productivity
for everyone involved (Kashyap 2018). 

Success is a very relative term. In the extreme case, which we study
a bit about in finance as well, one person’s success (profit) could
be someone else’s failure (loss). That being said, to triumph in creating
a valuation for knowledge and almost everything else, it is important
to know where we are and start the journey towards where we want to
be. A consequence (perhaps unintended) of taking the first step on
a journey means that the percentage progress we have made, in terms
of the distance travelled, shoots up to infinity\footnote{Kashyap (2019b) provides an infinite progress benchmark to be successful
and has a detailed discussion.}. So once we start the trip it becomes manageable immediately. The
subjectivity in how we compare things (assumption \ref{assu:Despite-the-several})
means that the benchmark (assumption \ref{assu:Questions-=000026-Answers,}
can be used if benchmarks can be understood to fall under the category
of definitions and assumptions) for knowledge valuation might be constantly
changing; though our results show that, whatever the measuring stick
(or atleast a wide array of benchmarks), the value of every bit of
knowledge will remain infinite, which means that we need to keep on
learning till there is nothing left to learn or we no longer need
to learn anything.
\end{doublespace}
\begin{doublespace}

\section{\label{sec:Appendix-Definitions-Assumptions}Appendix: List of All
Definitions and Assumptions within the Paper}
\end{doublespace}
\begin{assumption*}
\begin{doublespace}
As a first step, we recognize that one possible categorization of
different fields can be done by the set of questions a particular
field attempts to answer. Since we are the creators of different disciplines,
but we may or may not be the creators of the world (based on our present
state of knowledge and understanding) in which these fields need to
operate, the answers to the questions posed by any domain can come
from anywhere or from phenomenon studied under a combination of many
other disciplines.
\end{doublespace}
\end{assumption*}
\begin{question*}
\begin{doublespace}
What is the value of knowledge in any field?
\end{doublespace}
\end{question*}
\begin{assumption*}
\begin{doublespace}
Questions \& Answers, Q\&A, are important, but Definitions and Assumptions,
D\&A, are even more important since changing D\&A could require us
to consider different Q\&A.
\end{doublespace}
\end{assumption*}
\begin{doublespace}
\rule[0.5ex]{1\columnwidth}{1pt}
\end{doublespace}
\begin{assumption*}
\begin{doublespace}
Despite the several advances in the social sciences, we have yet to
discover an objective measuring stick for comparison, a so called,
True Comparison Theory, which can be an aid for arriving at objective
decisions. Hence, despite all the uncertainty in the social sciences,
the one thing we can be almost certain about is the subjectivity in
all decision making.
\end{doublespace}
\end{assumption*}
\begin{defn*}
\begin{doublespace}
\textbf{Type A} person, who has \textbf{All} the known knowledge in
the universe. So if any new knowledge becomes available, he is desperate
to have it, since without this new knowledge he is incomplete.
\end{doublespace}
\end{defn*}
\begin{doublespace}
\rule[0.5ex]{1\columnwidth}{1pt}
\end{doublespace}
\begin{defn*}
\begin{doublespace}
\textbf{Type Z} person, who has no knowledge about anything in the
universe. So he cares nothing about any knowledge, wants nothing and
his valuation for all pieces of knowledge would be \textbf{Zero}. 
\end{doublespace}
\end{defn*}
\begin{assumption*}
\begin{doublespace}
The homogeneous expectations assumption in finance is perhaps a very
futuristic one where we are picking the best habits and characteristics
from our fellow beings (maybe not just humans?) and the environment
we live in and the external stimulus we receive tends to become more
similar (or we start to perceive it as more alike?), and at some point
in the future, we might tend to have more in common with each other,
fulfilling this great assumption, which seems more of a prophecy.
\end{doublespace}
\end{assumption*}
\begin{doublespace}
\rule[0.5ex]{1\columnwidth}{1pt}
\end{doublespace}
\begin{assumption*}
\begin{doublespace}
Using a related concept from economics regarding equilibriums (Dixon
1990; Varian 1992; Foot-note \ref{enu:Equilibriums-should-perhaps}),
when we continue to evolve and evolve towards similarity, both the
\textbf{type} \textbf{A} and \textbf{type} \textbf{Z} kind of person
can be equilibriums, since they are the same kind of person with respect
to their views on identifying the value of elements around them.
\end{doublespace}
\end{assumption*}
\begin{defn*}
\begin{doublespace}
\textbf{\textit{Knowledge is a connection between different elements
of this universe. The elements could be many (more than two), two
or in some cases, a link from one element to the same element and
all other combinations. This requires us to clarify what is an element.
We suggest that the element discussed here is anything that belongs
to this universe and any characteristic of that element, as observable
in this universe (Foot-note \ref{enu:We-emphasize-Universe}). }}
\end{doublespace}
\end{defn*}
\begin{doublespace}
\rule[0.5ex]{1\columnwidth}{1pt}
\end{doublespace}
\begin{defn*}
\begin{doublespace}
Knowledge machines are elements themselves that look to create, or
discover, or record, connections between the various elements. They
are people, research journals, books, music, robots and everything
else that fulfills the property of being part of the efforts to add
to the collective pool of knowledge. We can also term them knowledge
seekers.
\end{doublespace}
\end{defn*}
\begin{ax*}
\begin{doublespace}
Knowledge never decreases in value. If it decreases in value, it was
never knowledge to begin with. This gives the following discount function,
$h\left(k,n\right)$, for knowledge.
\begin{equation}
h\left(k,n\right)=\left(1+k\right)^{n}\geq1\Rightarrow k\geq0\label{eq:Knowledge-Weight-One}
\end{equation}
\end{doublespace}
\end{ax*}
\begin{assumption*}
\begin{doublespace}
Past knowledge is important the older it is. That is knowledge from
long ago is more valuable than knowledge yesterday. 
\end{doublespace}
\end{assumption*}
\begin{doublespace}
\rule[0.5ex]{1\columnwidth}{1pt}
\end{doublespace}
\begin{assumption*}
\begin{doublespace}
Future knowledge is important the closer it is to us. That is knowledge
that is far away from today is less valuable than knowledge about
tomorrow. 
\end{doublespace}
\end{assumption*}
\begin{doublespace}
\rule[0.5ex]{1\columnwidth}{1pt}
\end{doublespace}
\begin{assumption*}
\begin{doublespace}
Certain knowledge could be useful at certain points in time, and then
decays around a certain point of high importance. 
\end{doublespace}
\end{assumption*}
\begin{doublespace}

\section{\label{sec:Notation-and-Terminology}Appendix: Dictionary of Notation
and Terminology}
\end{doublespace}
\begin{itemize}
\begin{doublespace}
\item $PV$ (present value) is the value of money at $time=0$ or at the
present moment.
\item $FV$ (future value) is the value of money at $time=n$ or in the
future.
\item $n$ is the number of periods (not necessarily an integer).
\item $i,k$ are the rates (of money or knowledge, known as interest rate
or knowledge rate) at which the corresponding amount (of money or
knowledge respectively) compounds each period. 
\item $\bar{k}$ is also a knowledge rate.
\item $FV_{t}$, is the value of cash flows at time $t$.
\item $g$ is the growth rate of money or knowledge over each time period.
\item $f\left(\cdots\right)$, is the discount function, which for money
is less than one, that is $f\left(i,n\right)=1/\left(1+i\right)^{n}\leq1$
or $f\left(i,g,n\right)=\left(1+g\right)^{n}/\left(1+i\right)^{n}\leq1$
when the growth rate $g$ is included.
\item $h\left(k,n\right)$ or $\bar{h}\left(\bar{k},n\right)$ are the discount
(or weight) functions for knowledge which are greater than one, that
is, $h\left(k,n\right)=\left(1+k\right)^{n}\geq1\Rightarrow k\geq0$.
Strictly speaking since the weight functions for knowledge are greater
than one we need to write this as $h\left(k,n\right)=\left(1+k\right)^{n}>1\Rightarrow k>0$.
\item $PAVK,PRVK\text{ and }FUVK$ denote the past (PA), present (PR) and
future (FU) Value of any element of Knowledge.
\item $\triangleq$ is sometimes used in mathematics (and physics) for a
definition. The notation $t\triangleq m$ (often) means ``$t$ is
defined to be $m$'' or ``$t$ is equal by definition to $m$''
(often under certain conditions).
\end{doublespace}
\end{itemize}
\begin{doublespace}

\section{\label{sec:Proof-of-Theorem}Appendix: Proof of Theorem \ref{thm:Main-Result-}}
\end{doublespace}
\begin{proof}
\begin{doublespace}
We first try to establish the present value of a particular piece
of knowledge that will become known in the future.
\[
\text{Relationship between }PRVK\ \text{ and }\ FUVK\,
\]
We divide the time between the future and the present into many intervals.
The possibility of some knowledge becoming known in the future will
depend on the knowledge we have today and how we will add to it in
the intervals between now and later as we approach the future time
when the new knowledge will become known. The new knowledge rests
on the many connections it has with the knowledge we know today and
the knowledge that we will accumulate from now till the future point
in time. Even if the new knowledge does not depend on any of the existing
knowledge and its connections from now to the future, we would still
need all this knowledge from now to the future to be able to establish
that this new knowledge is independent of the other knowledge elements.
We can then consider that this new knowledge will have to be linked
(or still requires connections) to many pieces of knowledge in the
intervals between now till later or we need to be aware of all this
knowledge to establish that there is no relationship between the new
future knowledge and the knowledge in between. This can be viewed
as the change in the connections the new knowledge will have or the
growth in the new knowledge based on the connections it will have,
as given by the weight functions (assumptions \ref{enu:Past-knowledge-is};
\ref{enu:Future-knowledge-is}; \ref{enu:Certain-knowledge-could})
\[
PRVK\ \triangleq\ FUVK\,h\left(k,n\right)
\]
We then consider the limit as the number of intervals goes to infinity,
that is as $n\rightarrow\infty$. The present value is then given
by,
\[
PRVK\ \triangleq\underset{n\rightarrow\infty}{\lim}\ FUVK\,h\left(k,n\right)
\]
 Using the property of the weight function, 
\[
h\left(k,n\right)=\left(1+k\right)^{n}\geq1\Rightarrow k\geq0
\]
Also noting that since the knowledge across all those time intervals,
between the future and now, needs to be connected as discussed in
Assumption 7, the growth rate of knowledge is strictly greater than
zero or, $k>0$. By Axiom 1 every piece of knowledge has value and
hence the value of the piece of knowledge that becomes known in the
future when it becomes known is greater than zero, $FUVK>0$. We then
have,
\[
PRVK\ \triangleq\underset{n\rightarrow\infty}{\lim}\ FUVK\,\left(1+k\right)^{n}
\]
\[
PRVK\ \triangleq\ \infty
\]
Next, we consider the future value of any piece of knowledge that
we have today,
\[
\text{Relationship between }FUVK\ \text{ and }\ PRVK\,
\]
Using a similar method of dividing the time between the present and
future into many intervals, we consider the connections the piece
of knowledge we have today is likely to establish with other elements
of knowledge as time passes. As before, there will be direct connections
between the knowledge today and the knowledge we will accumulate based
on it; and to know that the knowledge today does not depend on other
knowledge will still require knowing about the other knowledge elements.
This can be viewed as the growth in this knowledge or change in this
knowledge based on the connections it will have as given by the weight
functions (assumptions \ref{enu:Past-knowledge-is}; \ref{enu:Future-knowledge-is};
\ref{enu:Certain-knowledge-could})
\[
FUVK\ \triangleq\ PRVK\,\bar{h}\left(\bar{k},n\right)
\]
We then consider the limit as the number of intervals goes to infinity,
that is as $n\rightarrow\infty$. The future value is then given by,
\[
FUVK\ \triangleq\underset{n\rightarrow\infty}{\lim}\ PRVK\,\bar{h}\left(\bar{k},n\right)
\]
Using the property of the weight function, 
\[
\bar{h}\left(\bar{k},n\right)=\left(1+\bar{k}\right)^{n}\geq1\Rightarrow\bar{k}\geq0
\]
Also noting that since the knowledge across all those time intervals,
between now and the future, needs to be connected as discussed in
Assumption 7, the growth rate of knowledge is strictly greater than
zero or, $k>0$. By Axiom 1 every piece of knowledge has value and
hence the value of the piece of knowledge that we have today is greater
than zero, $PRVK>0$. 
\[
FUVK\ \triangleq\underset{n\rightarrow\infty}{\lim}\ PRVK\,\left(1+k\right)^{n}
\]
\[
FUVK\ \triangleq\ \infty
\]
The relationships between the present value and past value; the past
value and present value; the past value and future value; and future
value and past value can be established using similar arguments as
above. To summarize, the below relationships can be shown to give
the results in the statement of (Theorem \ref{thm:Main-Result-}).
\[
\text{Relationship between }PRVK\ \text{ and }\ PAVK\,
\]
\[
\text{Relationship between }PAVK\ \text{ and }\ PRVK\,
\]
\[
\text{Relationship between }PAVK\ \text{ and }\ FUVK\,
\]
\[
\text{Relationship between }FUVK\ \text{ and }\ PAVK\,
\]
\end{doublespace}
\end{proof}
\begin{doublespace}

\section{\label{sec:References}References}
\end{doublespace}
\begin{enumerate}
\begin{doublespace}
\item Alberts, B. (2017). Molecular biology of the cell. Garland science.
\item Alberts, B., Johnson, A., Lewis, J., Raff, M., Roberts, K., \& Walter,
P. (2002). Molecular Biology of the Cell, Garland Science, New York.
\item Arntzenius, F., \& Maudlin, T. (2002). Time travel and modern physics.
Royal Institute of Philosophy Supplements, 50, 169-200.
\item Aznar, J., \& Guijarro, F. (2007). Modelling aesthetic variables in
the valuation of paintings: an interval goal programming approach.
Journal of the Operational Research Society, 58(7), 957-963.
\item Barnett, R. (1999). Realizing the university. McGraw-Hill Education
(UK).
\item Beaver, D. D. (2001). Reflections on scientific collaboration (and
its study): past, present, and future. Scientometrics, 52(3), 365-377.
\item Bhaskar, R. (1998). Philosophy and scientific realism. Critical realism:
Essential readings, 16-47.
\item Bhaskar, R. (2009). Scientific realism and human emancipation. Routledge.
\item Bierman Jr, H., \& Smidt, S. (2012). The time value of money. In The
Capital Budgeting Decision, Ninth Edition (pp. 29-59). Routledge.
\item Bilaniuk, O. M. P., Deshpande, V. K., \& Sudarshan, E. G. (1962).
“Meta” relativity. American Journal of Physics, 30(10), 718-723.
\item Bodie, Z., Kane, A., \& Marcus, A. (2014). Investments (10th global
ed.). Berkshire: McGraw-Hill Education.
\item Boswell, J. (1873). The Life of Samuel Johnson. William P. Nimmo.
\item Botzung, A., Denkova, E., \& Manning, L. (2008). Experiencing past
and future personal events: Functional neuroimaging evidence on the
neural bases of mental time travel. Brain and cognition, 66(2), 202-212.
\item Boyer, P. (2008). Evolutionary economics of mental time travel?. Trends
in cognitive sciences, 12(6), 219-224.
\item Bozeman, B., \& Rogers, J. D. (2002). A churn model of scientific
knowledge value: Internet researchers as a knowledge value collective.
Research Policy, 31(5), 769-794.
\item Brealey, R. A., Myers, S. C., Allen, F., \& Mohanty, P. (2012). Principles
of corporate finance. Tata McGraw-Hill Education.
\item Brown, D. J., \& Werner, J. (1995). Arbitrage and existence of equilibrium
in infinite asset markets. The Review of Economic Studies, 62(1),
101-114.
\item Buss, D. M. (1985). Human mate selection: Opposites are sometimes
said to attract, but in fact we are likely to marry someone who is
similar to us in almost every variable. American scientist, 73(1),
47-51.
\item Calaprice, A. (2000). The expanded quotable Einstein. Princeton, NJ:
Princeton.
\item Carmona, C. I. (2015). Jewelry Appraisal Handbook. The Journal of
Gemmology, 34(7), 639-640.
\item Carr, B. (Ed.). (2007). Universe or multiverse?. Cambridge University
Press.
\item Chiarella, C., \& He, X. (2001). Asset price and wealth dynamics under
heterogeneous expectations. Quantitative Finance, 1(5), 509-526.
\item Church, G. M., Gao, Y., \& Kosuri, S. (2012). Next-generation digital
information storage in DNA. Science, 1226355.
\item Cochrane, J. H. (2009). Asset Pricing:(Revised Edition). Princeton
university press.
\item Costanigro, M., McCluskey, J. J., \& Mittelhammer, R. C. (2007). Segmenting
the wine market based on price: hedonic regression when different
prices mean different products. Journal of agricultural Economics,
58(3), 454-466.
\item Courant, R., Robbins, H., \& Stewart, I. (1996). What is Mathematics?:
an elementary approach to ideas and methods. Oxford University Press,
USA.
\item Courtney, E. (2013). A Commentary on the Satires of Juvenal (No. 2).
Lulu. com.
\item Cramér, H. (2016). Mathematical methods of statistics (PMS-9) (Vol.
9). Princeton university press.
\item Creswell, J. W. (2002). Educational research: Planning, conducting,
and evaluating quantitative (pp. 146-166). Upper Saddle River, NJ:
Prentice Hall.
\item Crevoisier, O. (2016). The economic value of knowledge: Embodied in
goods or embedded in cultures?. Regional Studies, 50(2), 189-201.
\item Crouhy-Veyrac, L., Crouhy, M., \& Melitz, J. (1982). More about the
law of one price. European Economic Review, 18(2), 325-344.
\item Cruz Rambaud, S., \& Ventre, V. (2017). Deforming time in a nonadditive
discount function. International Journal of Intelligent Systems, 32(5),
467-480.
\item Dancy, J., Sosa, E., \& Steup, M. (Eds.). (2009). A companion to epistemology.
John Wiley \& Sons.
\item De Groot, R. S., Wilson, M. A., \& Boumans, R. M. (2002). A typology
for the classification, description and valuation of ecosystem functions,
goods and services. Ecological economics, 41(3), 393-408.
\item Delaney, C. J., Rich, S. P., \& Rose, J. T. (2016). A Paradox Within
The Time Value Of Money: A Critical Thinking Exercise For Finance
Students. American Journal of Business Education (Online), 9(2), 83.
\item Delanty, G. (2001). The university in the knowledge society. Organization,
8(2), 149-153.
\item De Beaumont, M. L. P. (1804). Beauty and the Beast. Prabhat Prakashan.
\item Derman, E. (2011). Models behaving badly. Why Conusion Illusion with
Reality Can Lead to Disaster, on Wall Street and in Life. J. Willey
and Sons.
\item DeRose, K. (2005). What is epistemology. A brief introduction to the
topic, 20.
\item Deutsch, D., \& Lockwood, M. (1994). The quantum physics of time travel.
Scientific American, 270(3), 68-74.
\item Diallo, S. Y., Padilla, J. J., Bozkurt, I., \& Tolk, A. (2013). Modeling
and simulation as a theory building paradigm. In Ontology, Epistemology,
and Teleology for Modeling and Simulation (pp. 193-206). Springer,
Berlin, Heidelberg.
\item Dixon, H. (1990). Equilibrium and explanation. Published in Creedy
J. The Foundations of Economic Thought. Blackwells. pp. 356–394.
\item Donovan, A., \& Laudan, R. (Eds.). (2012). Scrutinizing science: Empirical
studies of scientific change (Vol. 193). Springer Science \& Business
Media.
\item Einstein, A. (1956). Relativity: the special and the general theory
(pp. 106-106). Crown Publishers.
\item Fara, P. (1999). Catch a falling apple: Isaac Newton and myths of
genius. Endeavour, 23(4), 167-170.
\item Figueroa, A. (2016). Science Is Epistemology. In Rules for Scientific
Research in Economics (pp. 1-14). Palgrave Macmillan, Cham.
\item Foray, D., \& Lundvall, B. Ä. (1998). The knowledge-based economy:
from the economics of knowledge to the learning economy. The economic
impact of knowledge, 115-121. Chicago 
\item Foucault, M. (1984). What is Enlightenment?. The Foucault Reader,
32-50.
\item Frankham, R. (1995). Inbreeding and extinction: a threshold effect.
Conservation biology, 9(4), 792-799.
\item Frankham, R. (1997). Do island populations have less genetic variation
than mainland populations?. Heredity, 78(3), 311.
\item Frederick, S., Loewenstein, G., \& O'donoghue, T. (2002). Time discounting
and time preference: A critical review. Journal of economic literature,
40(2), 351-401.
\item Froot, K. A., Kim, M., \& Rogoff, K. (1995). The law of one price
over 700 years (No. w5132). National Bureau of Economic Research.
\item Fuchs, S. (1993). A sociological theory of scientific change. Social
forces, 71(4), 933-953.
\item Gerring, J. (2011). Social science methodology: A unified framework.
Cambridge University Press.
\item Goodwin, B. K., Grennes, T., \& Wohlgenant, M. K. (1990). Testing
the law of one price when trade takes time. Journal of International
Money and Finance, 9(1), 21-40.
\item Gott, J. R. (2002). Time travel in Einstein's universe: the physical
possibilities of travel through time. Houghton Mifflin Harcourt.
\item Gustafson, C. R., Lybbert, T. J., \& Sumner, D. A. (2016). Consumer
sorting and hedonic valuation of wine attributes: exploiting data
from a field experiment. Agricultural economics, 47(1), 91-103.
\item Halliday, D., \& Resnick, R. (1967). Physics Part I and II; John Wiley\&
Sons.
\item Hamilton, J. D. (1994). Time series analysis (Vol. 2, pp. 690-696).
Princeton, NJ: Princeton university press.
\item Harré, R. (1985). The philosophies of science.
\item Hassani, S. (2009). Dirac delta function. In Mathematical methods
(pp. 139-170). Springer, New York, NY.
\item Hayden, B. Y. (2016). Time discounting and time preference in animals:
a critical review. Psychonomic bulletin \& review, 23(1), 39-53.
\item Hetherington, S. C. (2018). Knowledge puzzles: An introduction to
epistemology. Routledge.
\item Highet, G. (1961). Juvenal the satirist: a study (Vol. 48). Oxford
University Press.
\item Isard, P. (1977). How far can we push the\textquotedbl{} law of one
price\textquotedbl ?. The American Economic Review, 67(5), 942-948.
\item Ismail, S. (2014). Exponential Organizations: Why new organizations
are ten times better, faster, and cheaper than yours (and what to
do about it). Diversion Books.
\item Iyer, B. R., \& Bhawal, B. (Eds.). (2013). Black Holes, Gravitational
Radiation and the Universe: Essays in Honor of CV Vishveshwara (Vol.
100). Springer Science \& Business Media.
\item Jehle, G. A., \& Reny, P. J. (2011). Advanced Microeconomic Theory.
Harlow, England, New York: Financial Times.
\item Kaku, M. (2009). Physics of the impossible: A scientific exploration
into the world of phasers, force fields, teleportation, and time travel.
Anchor.
\item Kant, I. (2013). Originally Written in 1784. An answer to the question:'What
is enlightenment?'. Penguin UK.
\item Kapferer, B. (2007). Anthropology and the Dialectic of Enlightenment:
A Discourse on the Definition and Ideals of a Threatened Discipline.
The Australian Journal of Anthropology, 18(1), 72-94.
\item Kashyap, R. (2015). A Tale of Two Consequences. The Journal of Trading,
10(4), 51-95.
\item Kashyap, R. (2017). Fighting Uncertainty with Uncertainty: A Baby
Step. Theoretical Economics Letters, 7(5), 1431-1452.
\item Kashyap, R. (2018). Nature vs Nurture and Science vs Art of Publishing,
Life and Everything Else. Working Paper, Social Science Research Network
(SSRN) Link: \href{https://papers.ssrn.com/sol3/papers.cfm?abstract_id=3172817}{Nature vs Nurture and Science vs Art of Publishing, Life and Everything Else}.
\item Kashyap, R. (2019). The perfect marriage and much more: Combining
dimension reduction, distance measures and covariance. Physica A:
Statistical Mechanics and its Applications, 536, 120938.
\item Kashyap, R. (2019b). For Whom the Bell (Curve) Tolls: A to F, Trade
Your Grade Based on the Net Present Value of Friendships with Financial
Incentives. The Journal of Private Equity, 22(3), 64-81.
\item Katz, J. S., \& Martin, B. R. (1997). What is research collaboration?.
Research policy, 26(1), 1-18.
\item Keller, L. F., \& Waller, D. M. (2002). Inbreeding effects in wild
populations. Trends in ecology \& evolution, 17(5), 230-241.
\item Kosuri, S., \& Church, G. M. (2014). Large-scale de novo DNA synthesis:
technologies and applications. Nature methods, 11(5), 499.
\item Kvanvig, J. L. (2003). The value of knowledge and the pursuit of understanding.
Cambridge University Press.
\item Laudan, L., Donovan, A., Laudan, R., Barker, P., Brown, H., Leplin,
J., ... \& Wykstra, S. (1986). Scientific change: Philosophical models
and historical research. Synthese, 69(2), 141-223.
\item Laudan, R., Laudan, L., \& Donovan, A. (1988). Testing theories of
scientific change. In Scrutinizing science (pp. 3-44). Springer, Dordrecht.
\item Lebreton, M., Jorge, S., Michel, V., Thirion, B., \& Pessiglione,
M. (2009). An automatic valuation system in the human brain: evidence
from functional neuroimaging. Neuron, 64(3), 431-439.
\item Leike, A. (2001). Demonstration of the exponential decay law using
beer froth. European Journal of Physics, 23(1), 21.
\item Levy, M., \& Levy, H. (1996). The danger of assuming homogeneous expectations.
Financial Analysts Journal, 52(3), 65-70.
\item Lewis, D. (1976). The paradoxes of time travel. American Philosophical
Quarterly, 13(2), 145-152.
\item Little, D. (1991). Varieties of social explanation: An introduction
to the philosophy of social science.
\item Loewenstein, G., \& Prelec, D. (1991). Negative time preference. The
American Economic Review, 81(2), 347-352.
\item Lutz, J. F., Ouchi, M., Liu, D. R., \& Sawamoto, M. (2013). Sequence-controlled
polymers. Science, 341(6146), 1238149.
\item MacMillan, D. M. (2008). Diwali: Hindu festival of lights. Enslow
Publishers, Inc..
\item Manicas, P. T. (1991). History and philosophy of social science.
\item Mantzicopoulos, P., \& Patrick, H. (2010). “The seesaw is a machine
that goes up and down”: Young children's narrative responses to science-related
informational text. Early Education and Development, 21(3), 412-444.
\item Martin, J. (1996). Cybercorp: the new business revolution. American
Management Assoc., Inc.
\item Marty, W. (2017). The Time Value of Money. In Fixed Income Analytics
(pp. 5-16). Springer, Cham.
\item Marzilli Ericson, K. M., White, J. M., Laibson, D., \& Cohen, J. D.
(2015). Money earlier or later? Simple heuristics explain intertemporal
choices better than delay discounting does. Psychological science,
26(6), 826-833.
\item Masset, P., \& Weisskopf, J. P. (2015). Wine funds: an alternative
turning sour?. The Journal of Alternative Investments, 17(4), 6-20.
\item McKie, D., \& De Beer, G. R. (1951). Newton's apple. Notes and Records
of the Royal society of London, 9(1), 46-54.
\item Meadows, D. H., Meadows, D. L., Randers, J., \& Behrens, W. W. (1972).
The limits to growth. New York, 102, 27.
\item Melin, G. (2000). Pragmatism and self-organization: Research collaboration
on the individual level. Research policy, 29(1), 31-40.
\item Miotti, L., \& Sachwald, F. (2003). Co-operative R\&D: why and with
whom?: An integrated framework of analysis. Research policy, 32(8),
1481-1499.
\item Morgan, M. S., \& Morrison, M. (Eds.). (1999). Models as mediators:
Perspectives on natural and social science (Vol. 52). Cambridge University
Press.
\item Nagel, T. (2012). Mortal questions. Cambridge University Press.
\item Nahin, P. J. (2001). Time machines: Time travel in physics, metaphysics,
and science fiction. Springer Science \& Business Media.
\item Nahin, P. J. (2017). The Physics of Time Travel: II. In Time Machine
Tales (pp. 289-337). Springer, Cham.
\item Nelson, R. R. (1959). The simple economics of basic scientific research.
Journal of political economy, 67(3), 297-306.
\item Newman, J. (2017). Diwali (Festivals in Different Cultures). The School
Librarian, 65(2), 89.
\item Nikodým, O. M. (1966). Dirac’s Delta-function. In The Mathematical
Apparatus for Quantum-Theories (pp. 724-760). Springer, Berlin, Heidelberg.
\item Norris, C. (1994). What is enlightenment? Kant according to Foucault.
The Cambridge Companion to Foucault, 159-196.
\item O'hear, A. (1993). An introduction to the philosophy of science.
\item Olson, M., \& Bailey, M. J. (1981). Positive time preference. Journal
of Political Economy, 89(1), 1-25.
\item Ortiz, C. E., Stone, C. A., \& Zissu, A. (2017). Beer Annuities: Hold
the Interest and Principal. The Journal of Derivatives, 24(4), 108-114.
\item Papineau, D. (2002). Philosophy of science. The Blackwell companion
to philosophy, 286-316.
\item Pavitt, K. (1991). What makes basic research economically useful?.
Research policy, 20(2), 109-119.
\item Petters, A. O., \& Dong, X. (2016). The Time Value of Money. In An
Introduction to Mathematical Finance with Applications (pp. 13-82).
Springer, New York, NY.
\item Pitkethly, R. (1997). The valuation of patents: a review of patent
valuation methods with consideration of option based methods and the
potential for further research. Research Papers in Management Studies-University
of Cambridge Judge Institute of Management Studies.
\item Plomin, R., \& Daniels, D. (1987). Why are children in the same family
so different from one another?. Behavioral and brain Sciences, 10(1),
1-16.
\item Pritchard, D. (2009). The value of knowledge. The Harvard Review of
Philosophy, 16(1), 86-103.
\item Pritchard, D., Millar, A., \& Haddock, A. (2010). The nature and value
of knowledge: Three investigations. OUP Oxford.
\item Pritchard, D. (2018). What is this thing called knowledge?. Routledge.
\item Protopapadakis, A., \& Stoll, H. R. (1983). Spot and futures prices
and the law of one price. The Journal of Finance, 38(5), 1431-1455.
\item Psillos, S. (2005). Scientific realism: How science tracks truth.
Routledge.
\item Randall, L. (2006). Warped Passages: Unravelling the universe's hidden
dimensions. Penguin UK.
\item Reagan, M. D. (1967). Basic and applied research: a meaningful distinction?.
Science, 155(3768), 1383-1386.
\item Rescigno, A., Beck, J. S., \& Thakur, A. K. (1987). The use and abuse
of models. Journal of pharmacokinetics and biopharmaceutics, 15(3),
327-340.
\item Rao, C. R. (1973). Linear statistical inference and its applications.
New York: Wiley.
\item Rosenberg, A. (2018). Philosophy of social science. Routledge.
\item Ross, S. A., Westerfield, R. W., \& Jaffe, J. F. (2002). Corporate
Finance.
\item Rosser, M., \& Lis, P. (2016). Basic mathematics for economists. Routledge.
\item Roy, R. K., Meszynska, A., Laure, C., Charles, L., Verchin, C., \&
Lutz, J. F. (2015). Design and synthesis of digitally encoded polymers
that can be decoded and erased. Nature communications, 6, 7237.
\item Russell, B. (1948). Human knowledge: its scope and value. Routledge.
Chicago.
\item Sagan, C. (2006). Cosmos. Edicions Universitat Barcelona.
\item Sandel, M. J. (2012). What money can't buy: the moral limits of markets.
Macmillan.
\item Sankey, H. (2016). Scientific realism and the rationality of science.
Routledge.
\item Shankar, R. (2012). Principles of quantum mechanics. Springer Science
\& Business Media.
\item Shapere, D. (1980). The character of scientific change. In Scientific
discovery, logic, and rationality (pp. 61-116). Springer, Dordrecht.
\item Shapere, D. (1989). Evolution and continuity in scientific change.
Philosophy of science, 56(3), 419-437.
\item Shepard, H. A. (1956). Basic research and the social system of pure
science. Philosophy of Science, 23(1), 48-57.
\item Shleifer, A., \& Vishny, R. W. (1997). The limits of arbitrage. The
Journal of Finance, 52(1), 35-55.
\item Slatkin, M. (1987). Gene flow and the geographic structure of natural
populations. Science, 236(4803), 787-792.
\item Smart, J. J. C. (2014). Philosophy and scientific realism. Routledge.
\item Stocker, T. F. (1998). The seesaw effect. Science, 282(5386), 61-62.
\item Stukeley, W. (1936). Memoirs of Sir Isaac Newton's Life. Taylor and
Francis.
\item Suddendorf, T., \& Busby, J. (2005). Making decisions with the future
in mind: Developmental and comparative identification of mental time
travel. Learning and Motivation, 36(2), 110-125.
\item Suddendorf, T., \& Busby, J. (2003). Mental time travel in animals?.
Trends in cognitive sciences, 7(9), 391-396.
\item Suddendorf, T., \& Corballis, M. C. (2007). The evolution of foresight:
What is mental time travel, and is it unique to humans?. Behavioral
and brain sciences, 30(3), 299-313.
\item Suddendorf, T., Addis, D. R., \& Corballis, M. C. (2011). Mental time
travel and shaping of the human mind. M. Bar, 344-354.
\item Thaler, R. (1981). Some empirical evidence on dynamic inconsistency.
Economics letters, 8(3), 201-207.
\item Thorne, K. (2014). The science of Interstellar. WW Norton \& Company.
\item Toffler, A. (1990). Powershift. New York: Bantam.
\item Throsby, D. (2003). Determining the value of cultural goods: How much
(or how little) does contingent valuation tell us?. Journal of cultural
economics, 27(3-4), 275-285.
\item Tweney, R. D., Doherty, M. E., \& Mynatt, C. R. (1981). On scientific
thinking.
\item Wagenaar, W. A., \& Sagaria, S. D. (1975). Misperception of exponential
growth. Perception \& Psychophysics, 18(6), 416-422.
\item Wu, M. C., \& Tseng, C. Y. (2006). Valuation of patent–a real options
perspective. Applied Economics Letters, 13(5), 313-318.
\item Valsiner, J. (2007). Culture in minds and societies: Foundations of
cultural psychology. Psychol. Stud.(September 2009), 54, 238-239.
\item Varian, H. R. (1992). Microeconomic analysis.
\item Weinberg, S. (2007). Living in the multiverse. Universe or multiverse,
29-42.
\item Woodward, J. F. (1995). Making the universe safe for historians: Time
travel and the laws of physics. Foundations of Physics Letters, 8(1),
1-39.
\item Zhao, W., \& Zhou, X. (2011). Status inconsistency and product valuation
in the California wine market. Organization Science, 22(6), 1435-1448.
\item Zucker, L. G., Darby, M. R., \& Armstrong, J. S. (2002). Commercializing
knowledge: University science, knowledge capture, and firm performance
in biotechnology. Management science, 48(1), 138-153.\pagebreak{}
\end{doublespace}
\end{enumerate}
\begin{doublespace}

\section{Figures}
\end{doublespace}

\begin{doublespace}
\begin{figure}
\includegraphics[width=9cm]{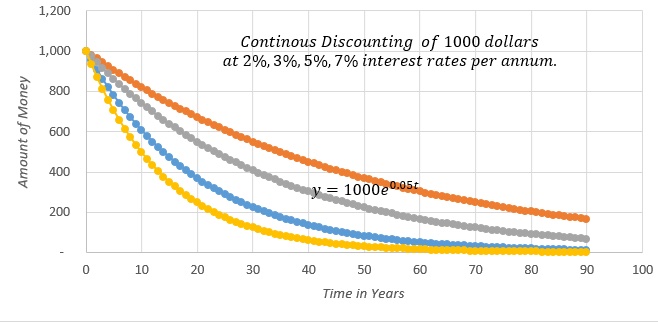}\includegraphics[width=9cm]{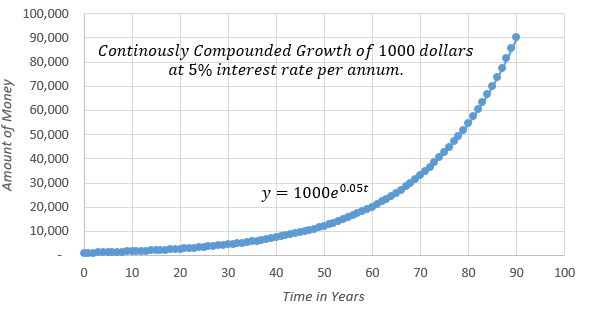}

\caption{\label{fig:Time-Value-of}Time Value of Money: Discounting and Compounding}
\end{figure}

\begin{figure}
\includegraphics[width=9cm]{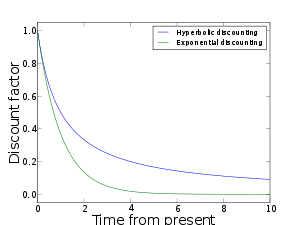}\includegraphics[width=9cm]{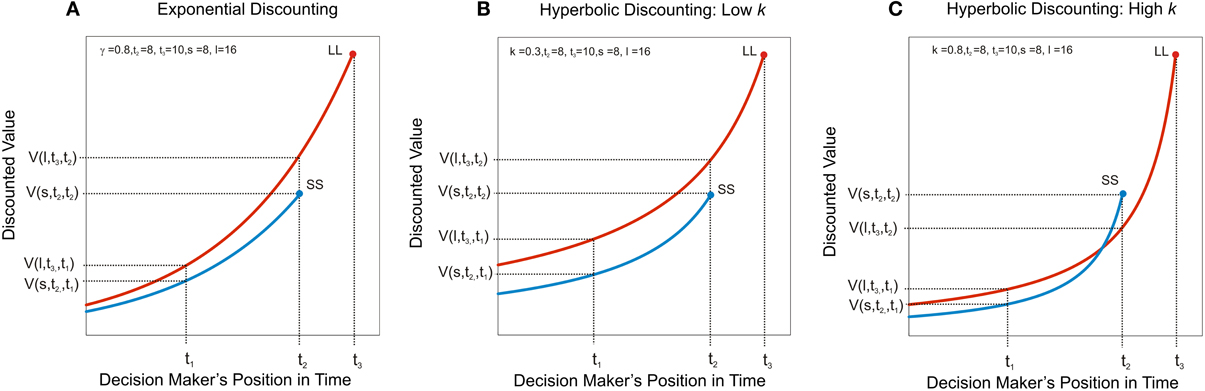}

\caption{\label{fig:Discount-Functions:-Hyperbolic}Discount Functions: Hyperbolic
and Exponential}
\end{figure}

\begin{figure}
\includegraphics[width=6.5cm]{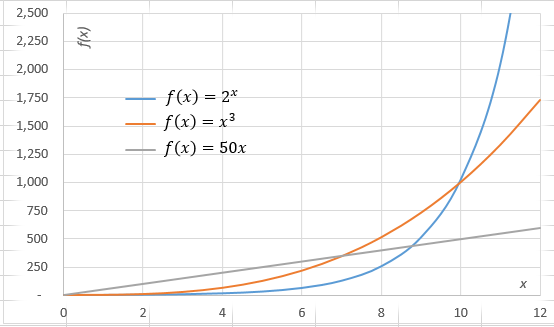}\includegraphics[width=5cm]{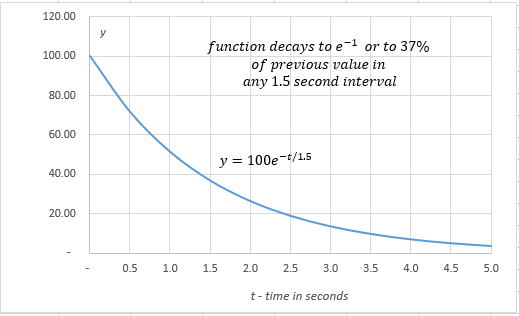}\includegraphics[width=6.5cm]{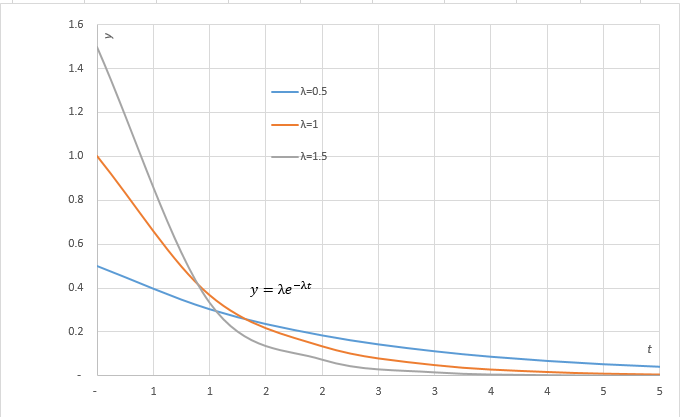}

\caption{\label{fig:Exponential-Growth-Decay}Exponential Growth and Decay}
\end{figure}

\begin{figure}
\includegraphics[width=9cm]{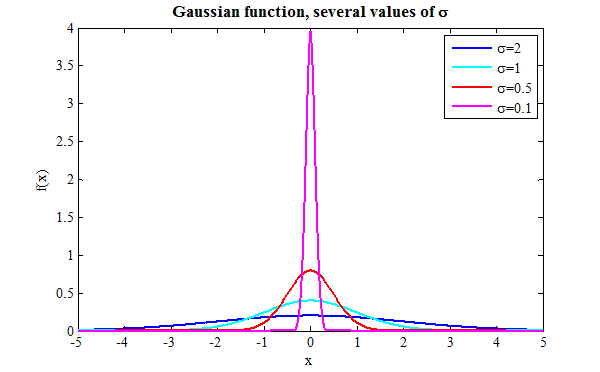}\includegraphics[width=9cm]{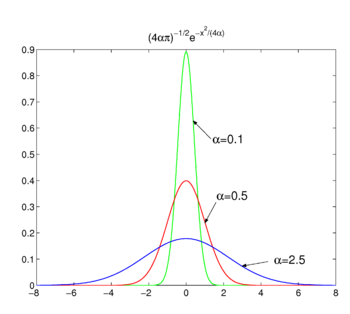}

\caption{\label{fig:Gaussian-Weight-Functions}Gaussian Weight Functions}
\end{figure}
\begin{figure}
\includegraphics[width=6cm]{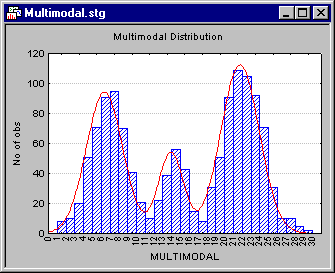}\includegraphics[width=6cm]{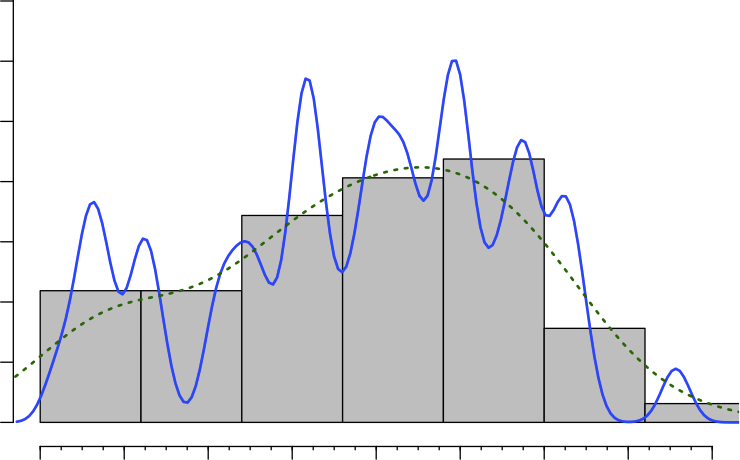}\includegraphics[width=6cm]{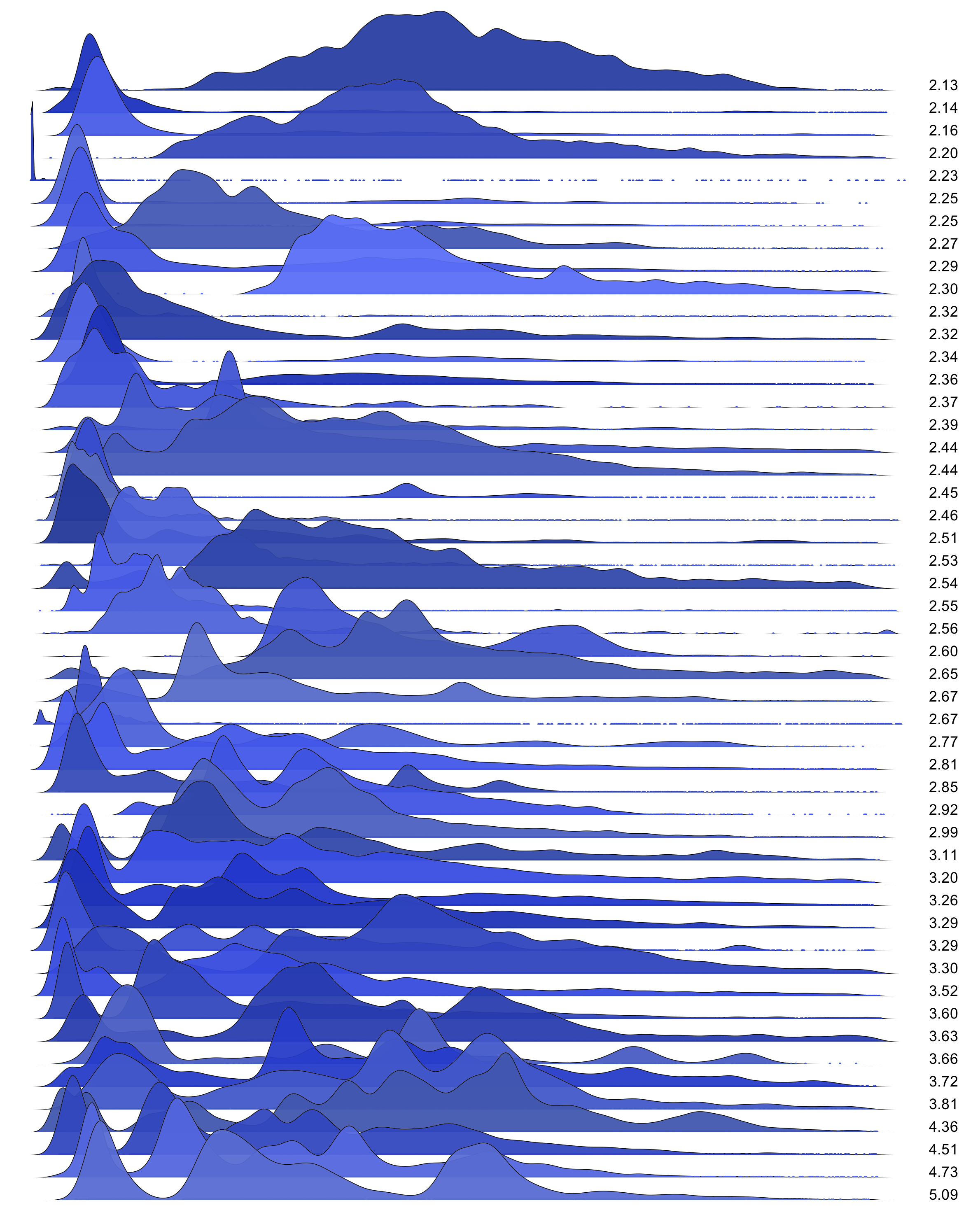}

\caption{\label{fig:Multi-Modal-Weight-Functions}Multi-Modal Weight Functions}
\end{figure}
\begin{figure}
\includegraphics[width=6cm]{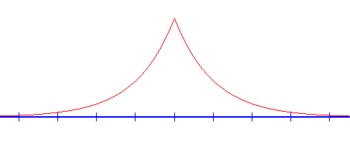}\includegraphics[width=12cm]{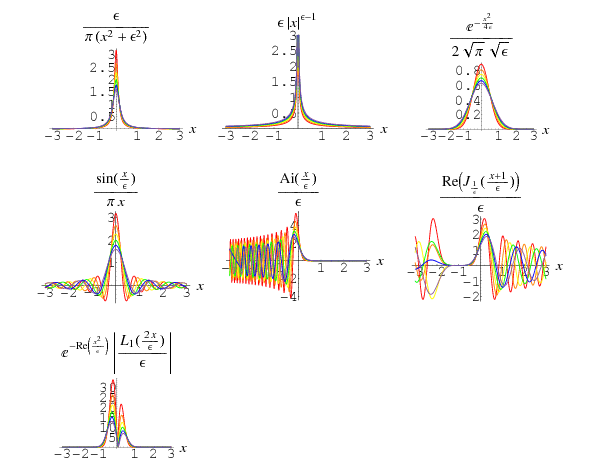}

\caption{\label{fig:Impulse-Functions}Impulse Functions}
\end{figure}
\begin{figure}
\includegraphics[width=18cm]{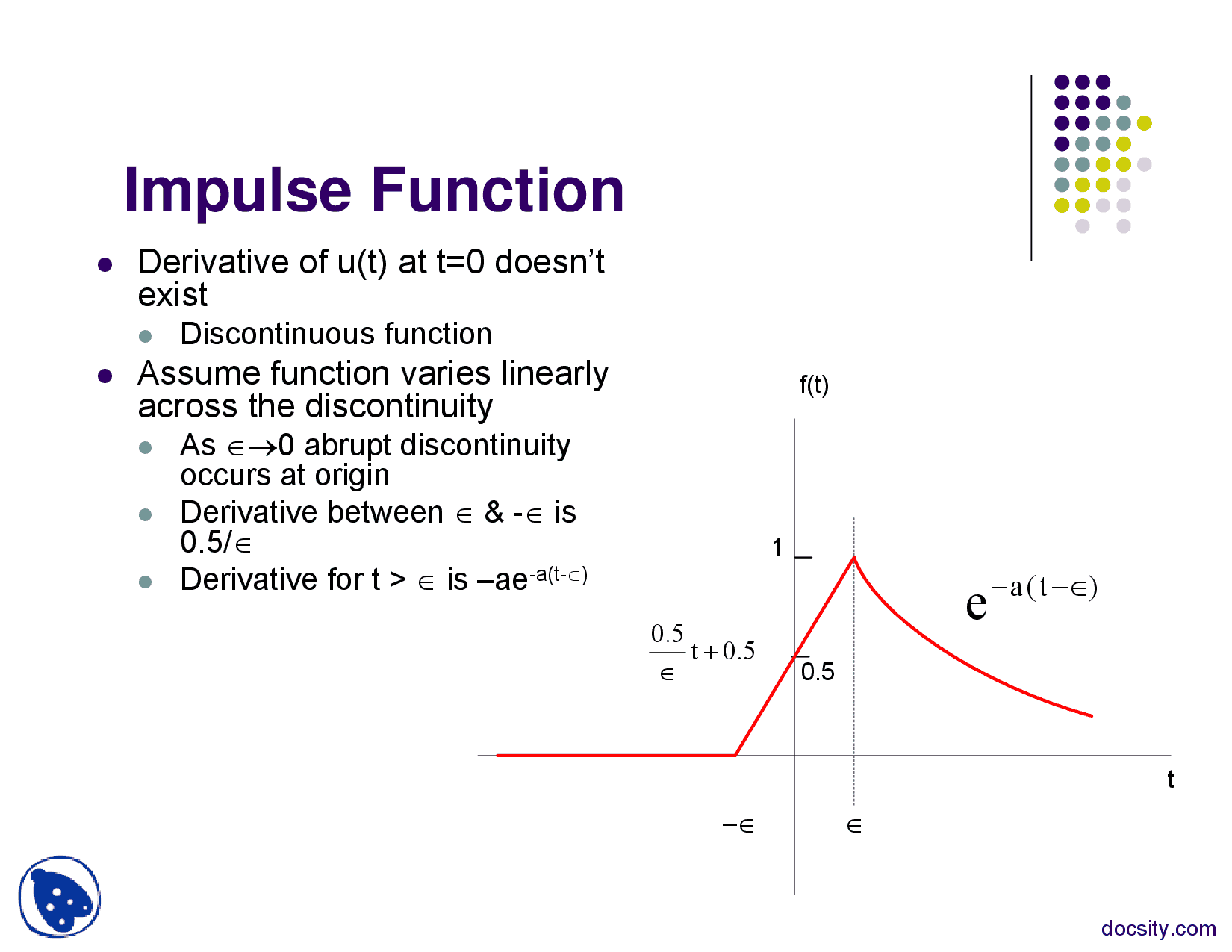}

\caption{\label{fig:Impulse-Function-with-Linear}Impulse Function with Linear
Discontinuity}
\end{figure}
\end{doublespace}

\end{document}